\documentclass[]{article}

\usepackage{booktabs}
\usepackage{fancyvrb}
\usepackage{tikz-cd}
\usepackage{algorithm}
\usepackage{algpseudocode}
\usepackage{caption}
\usepackage{url}

\usepackage{amssymb}
\usepackage{amsmath}
\usepackage{amsthm}


\newcommand{\eat}[1]{}

\newtheorem{theorem}{Theorem}

\newtheorem{lemma}[theorem]{Lemma}

\theoremstyle{definition}


\newcommand{\defn}{:=}
\DeclareFontFamily{U}{mathx}{\hyphenchar\font45}
\DeclareFontShape{U}{mathx}{m}{n}{
      <5> <6> <7> <8> <9> <10>
      <10.95> <12> <14.4> <17.28> <20.74> <24.88>
      mathx10
      }{}







\def\E{{\ensuremath{\mathbb E}}}

\long\def\comment#1{}



\newcommand{\dcal}{\ensuremath{\mathcal D}}

\newcommand{\ical}{\ensuremath{\mathcal I}}

\newcommand{\pcal}{\ensuremath{\mathcal P}}

\newcommand{\scal}{\ensuremath{\mathcal S}}

\usepackage{bm}

\captionsetup[table]{belowskip=4pt}
\captionsetup[figure]{aboveskip=4pt}

\usepackage{subcaption}

\usepackage{tikz}
\usepackage{pifont}
\newcommand{\cmark}{\ding{51}}

\usetikzlibrary{decorations.pathreplacing,calc}

\MakeRobust{\Call}

\begin{document}

\title{Conditional Cuckoo Filters}

\author{
  Daniel Ting \\
  Tableau Software \\
  1162 N 34th St \\
  Seattle, WA \\
  \texttt{dting@tableau.com} \\
  \and
  Rick Cole \\
  Tableau Software \\
  260 California Ave Ste 300 \\
  Palo Alto, CA \\
  \texttt{ricole@tableau.com}
}

\vspace{-0.1cm}
\maketitle

\begin{abstract}
	Bloom filters, cuckoo filters, and other approximate set membership sketches have a wide range  of applications. Oftentimes, expensive operations can be skipped  if an item is not in a data set. These filters provide an inexpensive, memory efficient way to test if an item is in a set and avoid unnecessary operations. 
	Existing sketches only allow membership testing for single set. However, in some applications such as join processing, the relevant set is not fixed and is determined by a set of predicates. 
	
	We propose the Conditional Cuckoo Filter, a simple modification of the cuckoo filter that allows for set membership testing given predicates on a pre-computed sketch. This filter also introduces a novel chaining technique that enables cuckoo filters to handle insertion of duplicate keys. We evaluate our methods on a join processing application and show that they significantly reduce the number of tuples that a join must process.
\end{abstract}

\vspace{-0.1cm}
\section{Introduction}
Approximate set membership data sketches such as Bloom and cuckoo filters allow a user to query whether an item $x$ belongs to a given set $\scal$, i.e. if $x \in \scal$. An item $x \in S$ in the set is always correctly classified. However, an item $x \not \in S$ has some small probability of being incorrectly classified as being in the set. In other words, the filters return no false negatives and have a small probability of returning false positives. These sketches are useful in database systems as they provide an inexpensive way to check whether an expensive operation, such as a disk access, needs to be performed.

We consider the general problem of testing set membership \emph{given predicates}. 
That is, consider a dataset $\dcal$ with each row consisting of a key $k_i$ and vector of attributes $\mathbf{a}_i$. Given a predicate $\pcal$, we wish to test if an item $x$ is in the set $\scal_{\pcal}$ of keys with attributes satisfying that predicate
\begin{align}
\scal_\pcal = \{ k: (k,a) \in \dcal \mbox{ and } \pcal(a) = \mathrm{true} \}.
\end{align}
For the purposes of this paper, we restrict ourselves to equality predicates.

We propose the Conditional Cuckoo Filter (CCF) to address this problem. 
The CCF can be seen as a simple modification of a cuckoo hash table \cite{pagh2004cuckoo} where rather than storing a key, value pair, it stores fingerprints or sketches of both. It is thus also similar to cuckoo filters \cite{fan2014cuckoo} which store only key fingerprints.
In the case of a  CCF, the value is a sketch of attribute columns. Using an attribute sketch greatly improves the functionality of the filter  at a modest cost in space. A CCF differs from both cuckoo hashes and filters in that keys may not be unique in the CCF and require techniques to handle duplicates. Tuples can share the same key but have differing attributes.
While existing cuckoo hash tables and filters only support inserting a small number of duplicated keys,
we introduce a chaining technique that allows the filter to store additional duplicates.

The CCF supports two useful operations. Given an item $x$ and predicate $\pcal$, it tests if $x \in \scal_\pcal$,in other words, if there is a matching row in the input data. Given just a predicate $\pcal$, some variations of the CCF return a cuckoo filter for the set $\scal_{\pcal}$.
Like other approximate set membership sketches it maintains the property that it cannot return false negatives.

We show that this can have significant benefits in join processing 
by reducing the number of tuples returned by intermediate scans. 
Most interestingly, it enables predicates from one table to be pushed down to scans on other tables. 
We evaluate the reduction on the real world IMDB data set.

\vspace{-0.15cm}
\section{Related work}
\emergencystretch 3em
A number of data sketches address the approximate set membership (ASM) problem including Bloom filters \cite{bloom1970bloomfilter}, 
d-left counting Bloom filters \cite{bonomi2006improved}, quotient filters \cite{bender2012quotientfilter}, and cuckoo filters \cite{fan2014cuckoo}. A number of variants \cite{putze2007cache, fan1998summary,Breslow2018mortonfilter, mitzenmacher2002compressed, wang2019vacuum, graf2019xor}, 
improve them through techniques such as cache awareness, compression, or add functionality supporting counting. These sketches significantly reduce the amount of space required in practical regimes. Theoretical work  \cite{pagh2005optimal} further reduces the asymptotic space usage to the information theoretic minimum. In all these cases, the structures only address simple set membership queries with no notion of predicates.

These structures are used in databases to speed up a number of operations. Most related to our work are the join filters used in Oracle  \cite{Das:2015:QOO:2824032.2824074, lahiri2015oracle}, Microsoft SQL Server \cite{galindo2008optimizing}, Informix XPS \cite{weininger2002efficient}, and SAP ASE \cite{sapase2013features}. Given a set of dimension tables and a large fact table, join filters construct Bloom filters when scanning the dimension tables and applying any predicates on them.
These filters effectively push down predicates on dimension tables to the fact table scan and significantly reduce its output.
Our work allows such filters to be precomputed and stored. This allows the filters to be applied to dimension tables on the build side of the join. This can result in, for example, smaller hash tables which do not spill data to disk.
\emergencystretch 0em

In databases, ASM sketches have been particularly useful for distributed join processing \cite{bratbergsengen1984hashing, Mackert1986bloomjoin, Lee2012mapreducebloom, Mullin1990semijoin} by reducing the number of tuples that  must be loaded or sent across a network.
They have also been used in join size estimation \cite{Mullin1993joinsize}
or to compute approximate join results \cite{Quoc2018approxjoin}.

ASM sketches are  extensively used in log structured merge (LSM) tree based key-value stores \cite{sears2012blsm, apachecassandra, apachehbase, rocksdb, leveldb} and in testing if query results are in a cache \cite{oracle11gresultscache}.
Outside of databases, approximate set membership sketches have a wide range of applications, particularly in networking  \cite{Broder02networkapplications, tarkoma2011theory}.

Also related are methods that sketch attribute columns. Column sketches \cite{hentschel2018columnsketch}  create small and hardware optimized data sketches that speed up scans involving a predicate. Bloom indexes in Postgres \cite{postgres} similarly use Bloom filters on rows. A scan on the small sketch locates a subset of the full data that must be read.

\section{Filters in join processing}
\label{sec:filters}
Approximate set membership sketches, and in particular Bloom filters, play an important role in reducing the costs of processing joins and semi-joins. 
By reducing the number of tuples that are processed after an initial scan, downstream  processing costs can be reduced significantly. 

Consider the following query
\small
\begin{verbatim}
    SELECT ci.*, t.title, mc.note
      FROM cast_info ci, title t, movie_companies mc
     WHERE t.id = ci.movie_id
       AND t.id = mc.movie_id
       AND ci.role_id = 4
       AND t.kind_id = 1
       AND mc.company_type_id = 2
\end{verbatim}
\normalsize

A typical query plan may build two hash tables consisting of the results of $mc$ filtered by the predicate on $mc$, followed by $t$ joined with $mc$ with predicates on $t$ and $mc$ applied.

Prebuilt Bloom or other approximate set membership filters can further restrict the set of ids to approximately those in the intersection of $ci, t, $ and $mc$ but without predicates applied.
This can substantially reduce the number of tuples that need to be added to each hash table. This is particularly useful in distributed settings where the tuples must be sent over the network.

They are also useful in the non-distributed case. 
For example, reducing the number of tuples can change a query plan from a Grace hash join that spills tuples to disk to a simple hash join that can process all tuples in memory \cite{kitsuregawa1983application}. Even if the query plan does not change, the reduction can still significantly reduce costs in building the hash table, for example, in columnar stores when each tuple incurs a row stitching cost \cite{lahiri2015oracle}.

Since approximate set membership filters do not contain any information about the predicates, they may contain far more ids than desired. 
The above query, on the IMDB dataset\footnote{We use a pre-2017 snapshot of the IMDB dataset compatible with the Join Order Benchmark}, joins the title table with the movie company and cast information tables on movie id.
This is a star join among three large tables, and hence, it is beneficial to apply as much filtering as possible when scanning all three tables. 
However, the title table contains the universe of all movie ids. Applying a prebuilt Bloom filter for the title table to the other tables is useless. However, a Bloom filter for Bollywood movies would both greatly restrict the relevant production companies and cast members.

In an ideal case, there exists a pre-built approximate set membership filter selecting only ids that match each given predicate. In that case, each hash table built consists only of ids that appear in the final result plus a limited number of false positives. In effect, the predicates for $ci$ and $t$ can be pushed down to $mc$. Our goal is exactly this, to construct a sketch which returns an approximate set membership filter for any given set of predicates.

\begin{table}
  \centering
	\begin{tabular}{c|l}
		Symbol & Meaning \\ \hline
		$h$ & Hash function \\
		$(k,\mathbf{a})$ & Query for key $k$ and attributes $\mathbf{a}$ \\
		$\ell, \ell'$ & Bucket and alternate bucket for $k$\\
		$\kappa$ & Key fingerprint \\
		$\mathbf{\alpha}$ & Attribute fingerprint vector \\
		$\beta$ & Load factor of hash table \\
		$\oplus$ & XOR operation \\
		$b$ & Number of entries per bucket \\
		$d$ & max \# of a duplicate keys in a bucket pair \\
		$H$ & Cuckoo hash table \\
		$L_{max}$ & Maximum chain length \\
		$H_\ell$, $H_{\ell,i}$ & Set of entries, or entry $i$, in bucket $\ell$  \\
		$m$ & Number of buckets in table \\
		$(k,\mathbf{a}) \in H$, $k \in H$ & CCF returns true for query 
	\end{tabular}
\caption*{Table of symbols}
\label{tbl:symbols}
\end{table}

\section{Preliminaries}
Our methods are modifications of cuckoo filters and hash tables which provide more robust capabilities for storing duplicate keys and add information about attribute values to each key. Before describing our methods, we first review cuckoo hashing techniques for those unfamiliar with them. 

Cuckoo hash tables are a form of open addressing hash table. Such a hash table is arranged as a fixed size array of entries. They avoid the overhead of storing pointers  used by separate chaining techniques to handle hash collisions, and they can often make more efficient use of cache.  
The space overhead for an open addressing hash table instead depends on the proportion of empty entries. It has low overhead when the proportion of filled entries, or load factor $\beta$, is close to 1. 
Typical collision resolution techniques such as linear probing have expected query times that grow as the load factor $\beta$ increases. This puts query speed and space efficiency in direct opposition to each other.
Linear probing has a query and insertion cost of $O(1 + 1/(1-\beta)^2)$ 

Unlike collision resolution techniques where the locations of items in the hash table are immutable after insertion,
cuckoo hashes can relocate items when there are collisions at insertion time. 
By resolving collisions at insertion time, the number of buckets that need to be probed at query time can be reduced to two while still being able to achieve a high load factor on the table. Furthermore, cuckoo hash tables have $O(1)$ amortized expected insertion time. 
Cuckoo hash tables are typically arranged in a tiered fashion so that an item is first hashed to one of $m$ candidate buckets. Each bucket contains $b$ entries in which data can be stored.

\subsection{Cuckoo insertion}
The cuckoo hashing insertion algorithm follows. 
When inserting an item, value pair $(k, v)$,
the item is hashed to two possible buckets $\ell, \ell'$.
If $k$ is already in one of the buckets, then the value is simply updated.
Otherwise, if either bucket is not full, the pair is simply added to a non-full bucket with $\ell$ being preferred over $\ell'$.
If both are full then it picks a random pair $(k',v')$ from the two buckets. It "kicks out" that pair and replaces it with $(k,v)$. 
The pair $(k',v')$ is then reinserted into the sketch at its alternate bucket. 
The process of kicking out and reinserting is performed up to some maximum $\mathrm{MaxKicks}$ number of times. The process reaches this maximum then the table is resized and all pairs are reinserted.
To query a cuckoo  hash table, only the two possible buckets need to be examined for the key.

\subsection{Cuckoo filters}
Cuckoo filters can be seen as a particular form of cuckoo hash table. 
They are used for ASM queries rather than for key-value queries. 
There are two primary differences from typical cuckoo hash tables. 
First, only a small fingerprint $\kappa$ of the key $k$ is stored in the table. 
It does not store the full key nor any value associated with the key. 
Second, it uses partial-key cuckoo hashing
where the alternate bucket $\ell' = \ell \oplus h(\kappa)$ is determined only by the bucket $\ell$ and the fingerprint.
This allows the alternate bucket to be computed using the limited information stored in the sketch rather than requiring the discarded full key. Here $\oplus$ is the XOR operation and $h$ is a hash function.

To check if a key $k$ exists, the $\leq 2b$ items in its 2 buckets $\ell, \ell'$ are checked for a matching fingerprint. Trivially, if $k$ was previously inserted into the filter then the filter will find a match, so that there are no false negatives. If $k$ was not previously inserted, then there is probability $D \times 2^{-|\kappa|}$ that there is a matching fingerprint due to random chance where $D \leq 2b$ is the number of filled entries in the two buckets and $|\kappa|$ is the size of the key fingerprint in bits.
The False Positive Rate (FPR) of the filter is thus $\rho = 2^{-|\kappa|} \E D$.

For a typical setting where the number of entries per bucket $b=4$, an optimally sized cuckoo filter requires approximately
$(\log_2 1/\rho + 3) / \beta$ bits per item to achieve a desired FPR of $\rho$ where $\beta$ is the load factor of the table. According to \cite{fan2014cuckoo}, an optimally sized filter empirically achieves $\beta \approx 95\%$ when $b=4$. Compared to Bloom filters which require $\approx 1.44 \log_2 1/\rho$ bits per item to achieve the same FPR, an optimally sized cuckoo filter requires fewer bits per item when the desired FPR $\rho < 0.35\%$. 

In order to further reduce the number of bits per item needed to achieve a target FPR, the entries in the bucket can be sorted. This reduces the entropy of the bucket and allows for a more efficient encoding. This can be done efficiently if only 4-bit prefixes of the fingerprints are sorted. In this case, the bits per item needed is reduced to $(\log_2 1/\rho + 2) / \alpha$.
This reduction allows cuckoo filters to use fewer bits than a Bloom filter when the target FPR $\rho < 2.5\%$.

\subsection{Multisets}
Cuckoo filters have limited support for multisets and duplicate keys. They can be extended either by adding a counter to each entry or by inserting an additional copy of a key fingerprint. We focus on the latter as we  additionally store unique attributes with each duplicated key.
This makes cuckoo filters more flexible than Bloom filters as they support deletions by removing a copy of a key fingerprint.
However, there is a cap of $2b$ copies that can be inserted since a key can only probe $2b$ entries in its 2 buckets. Furthermore,  there are no theoretical guarantees that a high load factor can be obtained. Empirically, we see they cannot. Figure \ref{fig:multiset dupes} shows that the load factor decreases dramatically when there are duplicate keys, and insertions into the filter can fail almost immediately when the distribution of duplicated keys is highly skewed.

\section{Conditional cuckoo filters}
We now introduce the Conditional Cuckoo Filter (CCF).
These support set membership queries with equality predicates.
Like a cuckoo filter, a CCF is based on cuckoo hashing and saves space by using fingerprint or sketches. 
By storing sketches for both key and attribute values rather than just the key, they provide a space efficient structure that supports predicates. This can be much more efficient than the alternative which stores a separate filter for each combination of predicate values. Such a strategy would grow exponentially in size.

There are two main problems in adding back this attribute information. 
First, how can a set of attributes be summarized and stored in a small amount of space? 
Second, how can the cuckoo hash table deal with non-unique keys, especially when the key distribution is highly skewed?

We propose and evaluate several solutions to these problems. 
To solve the former, we introduce the extremely simple, but novel, idea  of sketching attributes and provide three ways of sketching attributes: using a vector of fingerprints, a Bloom filter, or a mixture of the two. For all of these, the primary novel extensions to cuckoo filters and hashing are in how duplicate keys with unique fingerprints are handled as cuckoo filters quickly fail in the presence of duplicates. The methods we propose employ either a chaining mechanism which allows the CCF to use more buckets as more duplicates are encountered or a method to switch from fingerprint vectors to Bloom filters.

\subsection{Attribute fingerprint vectors}
The simplest method for sketching a vector of attributes  hashes each attribute value into a small number of bits $s$, say 4 or 8, to construct a vector of attribute fingerprints.
Despite their extremely small size, these attribute sketches can  be effective in applications. 
For example, in join processing
the probability of a false match does not need to be extremely low to be effective. 
The expected output size after applying an equality predicate to an intermediate scan is
\begin{align*}
	\E M_{output} &= M_{true} + \E\, \mathit{FPR} *  (M_{original} - M_{true}) \\
	&< M_{true} + \E\, \mathit{FPR} * M_{original}.
\end{align*}
Thus, when the number of true matching tuples is a small percentage of the output tuples, $M_{true} / M_{output} \approx 0$, a relatively poor FPR of just $10\%$  reduces the number of tuples produced by a factor of nearly $10$.
Furthermore, if more than one predicate is applied, the reduction can be multiplicatively amplified. 
Similarly, there is little reason to target a FPR much smaller than the ratio $M_{true} / M_{original}$, as this unavoidable cost becomes the dominant cost in processing.

The resulting CCF will thus have a very low FPR when a key is absent from the set but allows for a higher FPR if the key exists but there is no matching attribute.
Similar to key where we represent a key $k$ with its Greek counterpart $\kappa$, we use $\mathbf{a}$ to denote attribute vectors and
$\mathbf{\alpha}$ to denote their attribute fingerprint vector. When using attribute fingerprint vectors, the underlying cuckoo hash table must be modified to handle duplicated keys with unique attribute fingerprints.

\subsection{Bloom filter attribute sketches}
\label{sec:bloom attributes}
A second choice represents attributes with a Bloom filter. Each (attribute name, value) pair is inserted into a small Bloom filter. 
The  resulting sketch is simply a cuckoo filter with an added Bloom filter for each entry. Algorithm \ref{alg:query} summarizes the procedure for querying the filter. The only difference from a regular cuckoo filter query is the additional check to verify if the attribute matches. A CCF using Bloom attribute sketches can also support queries that only specify a predicate and not a key. It returns a cuckoo filter which a downstream process can use to check the existence of a key. To do this, simply erase all entries where the predicate does not match and return the resulting array of key fingerprints. This is summarized in algorithm \ref{alg:predicate query}.

\begin{algorithm}[H]
	\begin{algorithmic}
		\State $(\kappa, \ell) \gets h(k)$
		\State $\mathbf{\alpha} \gets h_A(\mathbf{a})$ 
		\State $\ell' \gets \ell \oplus h(\kappa)$
		\For{$(\kappa', \mathbf{\alpha}') \in H_\ell \cup H_{\ell'}$}
		\If{$\kappa = \kappa'$ and $\mathbf{a}$ matches $\mathbf{\alpha}'$}
		\State \Return True
		\EndIf
		\EndFor
		\State \Return False
	\end{algorithmic}
	\caption{Query($k, \mathbf{a}$)}
	\label{alg:query}
\end{algorithm}

\begin{algorithm}[H]
	\begin{algorithmic}
		\State $m, b \gets \mathrm{Dimension}(H)$
		\State $H' \gets$ new CuckooFilter($m,b$)
		\For{$\ell \gets 1, \ldots m$}
		\For{$i \gets 1,\ldots, b$}
		\State $(\kappa, \mathbf{\alpha}) \gets H_{\ell, i}$
		\If{$\mathbf{a}$ matches $\mathbf{\alpha}$}
		\State $H'_{\ell,i} = \kappa$
		\EndIf
		\EndFor
		\EndFor
		\State \Return H'
	\end{algorithmic}
	\caption{PredicateQuery($\mathbf{a}$)}
	\label{alg:predicate query}
\end{algorithm}

Using a Bloom filter attribute sketch has mixed effects on the required size of the sketch and the FPR.
First, note that the occupied entries in the sketch are exactly the same as those of a cuckoo filter. Thus, appropriately sized filters are theoretically guaranteed \cite{eppstein2016cuckoo} to obtain high load factors with high probabilty. For our other methods, we only have empirical results showing high load factors are obtained. However, a Bloom filter is less bit efficient than a fingerprint vector. An optimized Bloom filter requires $\approx 1.44 \log_2 (1/\rho)$ bits per attribute to achieve an FPR of $\rho$ versus $\log_2(1/\rho)$ for a fingerprint vector. This inefficiency is exacerbated since it is not possible to choose optimal parameters for the Bloom filter. The optimal choice for the number of hash functions to use depends on the number of distinct (attribute, value) pairs that will be added to the filter. These are not known in advance and can vary greatly for different keys.
Second, when multiple rows of data share the same key but have different attribute vectors, a Bloom filter attribute sketch does not encode which attribute values occur together in the same row. If row 1 has attributes $(a_1, a_2)$ and row 2 has $(a_1',a_2')$, then given a predicate $A_1 = a_1 \land A_2 = a_2'$, there are no matching rows for the predicate, but a Bloom filter attribute sketch is guaranteed to return a false positive.
This would remain true if the Bloom filter were replaced with any other ASM sketch.
On the other hand, if all queries contain only a single equality predicate, then the fingerprint vector can unnecessarily store a single attribute value $A_1=a_1$ multiple times when the second attribute $A_2$ is varying.  A Bloom filter only encodes it once.

\emergencystretch 3em
\section{Multiset representations }
\emergencystretch 0em
Although one advantage of attribute fingerprint vectors is the ability to store  co-occurence information, each unique fingerprint vectors must still occupy distinct entries in the CCF. The ability to handle a potentially large number of duplicates is an important ability that normal cuckoo hash tables lack. A key's 2 buckets contain at most $2b$ entries, and inserting any more copies of a key is guaranteed to fail. This is problematic as many data sets have highly skewed distributions for the key.

We present two strategies to address this and maintain a no false negative guarantee. One converts attribute fingerprint vectors to  Bloom filters when too many duplicates are encountered. The second is a form of chaining that allows a key to utilize more than 2 buckets.

In both cases, we allow a maximum of $d$ duplicated key fingerprints per bucket pair.
If an attempted insertion for key $k$ is on a bucket pair  already containing $d$ copies of the fingerprint $\kappa$,
then either the $d$ copies are converted to a Bloom filter attribute sketch or additional bucket pairs are considered via the chaining procedure.

\subsection{Bloom filter conversion}
Consider a key $k$ and attribute vector $\mathbf{a}$ that must be inserted in the pair of buckets $\ell, \ell'$.
Let $|\kappa|, |\mathbf{\alpha}|$ denote the size of the key fingerprint and attribute fingerprint vectors and
$\#\mathbf{\alpha}$ denote the number of attributes.
Suppose there are already $d$ copies of the fingerprint $\kappa$ in the $2b$ entries of $\ell,\ell'$.
Bloom filter conversion takes the $d |\mathbf{\alpha}|$ bits that are currently used to store $d$ fingerprint vectors and constructs a single Bloom filter in their stead. Each entry in the sketch also requires an additional bit to track whether it contains a Bloom filter attribute sketch or a fingerprint vector.

This conversion operation has the advantage that it can never fail.  However, it adds complexity in storing a Bloom filter among $d$ entries and maintaining it whenever a bucket's entry is kicked into the alternate bucket. It has the same advantages and disadvantages outlined in section \ref{sec:bloom attributes} when directly using a Bloom filter attribute sketch but with two main differences. 
It has a further disadvantage in that hash collisions can be introduced both from hashing attribute values into fingerprints and from inserting fingerprints into the Bloom filter. Directly using a Bloom filter only introduces collisions from the latter. It has an advantage in that
the Bloom filter parameters can be chosen more easily since the minimum number of duplicates is $d$. The Bloom filter parameters do not need to be optimized to also handle rows with a unique key. 

The storage of the entries can be further optimized to avoid storing the same key fingerprint multiple times. Instead, each bucket can store a single copy of the key fingerprint along with the number of entries the Bloom filter attribute sketch occupies in that bucket.
If the other entries sharing the same key fingerprint are stored contiguously, then the Bloom filter can be successfully reconstructed. 
In this case, the required number of bits to store the fingerprint and counts is $2(|\kappa| + \lceil \log_2 d \rceil)$ and the number of bits in $d$ entries is $d (|\kappa| + |\alpha|  + 1)$. 

We choose the number of hash functions to be approximately the optimal number assuming there are $(d+1) \cdot \#\mathbf{\alpha}$ unique attribute name, value pairs added to the filter. Let $|B|$ be the number of bits available to the Bloom filter.
\begin{align}
	\mbox{\# hashes} &\approx \frac{|B|}{(d+1) \cdot \#\mathbf{\alpha} } \log 2\\
	&\approx \frac{|\mathbf{\alpha}|}{\#\mathbf{\alpha}}\frac{d}{d+1} \log 2 \quad \mbox{if $|\mathbf{\alpha}| \gg |\kappa|$}
\end{align}

Algorithm \ref{alg:bloom conversion} summarizes the method to convert attribute fingerprint vectors to a Bloom filter.

\begin{algorithm}
	\caption{BloomConversion($H, \ell, \ell', \kappa, \mathbf{\alpha}$)}
	\label{alg:bloom conversion}
	\begin{algorithmic}
		\State Sort entries in $H_\ell$ and $H_{\ell'}$ by fingerprint
		\State $r_i \gets \#$ entries with fingerprint $\kappa$ in $H_i$ for $i =\ell, \ell'$.
		\State $s \gets $ Size of single entry
		\State $numHash \gets \frac{|\mathbf{\alpha}|}{\#\mathbf{\alpha}}\frac{d}{d+1} \log 2$
		\State $totalBits \gets d s - 2(|\kappa| + \lceil \log_2 d \rceil)$
		\State $B \gets Bloom(numHash, totalBits)$
		\For{$(\kappa', \mathbf{\alpha'}) \in H_\ell \cup H_{\ell'} \cup \{(\kappa, \mathbf{\alpha})\}$}
		\If{$\kappa' = \kappa$}
		\For{$j \gets 1,\ldots, \#\mathbf{\alpha}'$}
		\State Insert $(j,\alpha_j)$ into $B$
		\EndFor						
		\EndIf
		\EndFor
		\State $\underline{\ell}, \overline{\ell} \gets \min\{\ell,\ell'\} , \max\{\ell,\ell'\}$
		\State $\mathit{bits}_{\underline{\ell}} \gets r_{\underline{\ell}} s - |\kappa| + \lceil \log_2 d \rceil$
		\State Pack $(\kappa, r_{\underline{\ell}}, B_{1,\ldots, \mathit{bits}_{\underline{\ell}}})$ into the $r_{\underline{\ell}}$ entries in $H_{\underline{\ell}}$
		\State Pack $(\kappa, r_{\overline{\ell}}, B_{\mathit{bits}_{\underline{\ell}}+1, \ldots, numBits})$ into the $r_{\overline{\ell}}$ entries in $H_{\overline{\ell}}$
	\end{algorithmic}
\end{algorithm}

\subsection{Chaining}
Chaining introduces additional bucket pairs whenever an insertion 
would violate the constraint that at most $d$ copies of a key fingerprint $\kappa$
 are in a bucket pair $\ell, \ell'$. 
A second bucket pair is determined by hashing both the bucket pair and fingerprint,
$\tilde{\ell} \defn h(\min\{\ell, \ell'\}, \kappa)$. Unlike the XOR operation for bucket pairs, determining the second bucket pair is a one way operation. The second bucket pair is computable from the first, but not vice-versa. The min can also be replaced by some other symmetric function.
This may be repeated to generate a chain of bucket pairs. 
We cap the number of bucket pairs that are generated by a constant $L_{max}$. The procedure is illustrated in figure \ref{fig:chaining}. The algorithm is summarized in algorithm \ref{alg:insert}.

\begin{figure}[H]
  \centering
	\includegraphics[width=.8\textwidth]{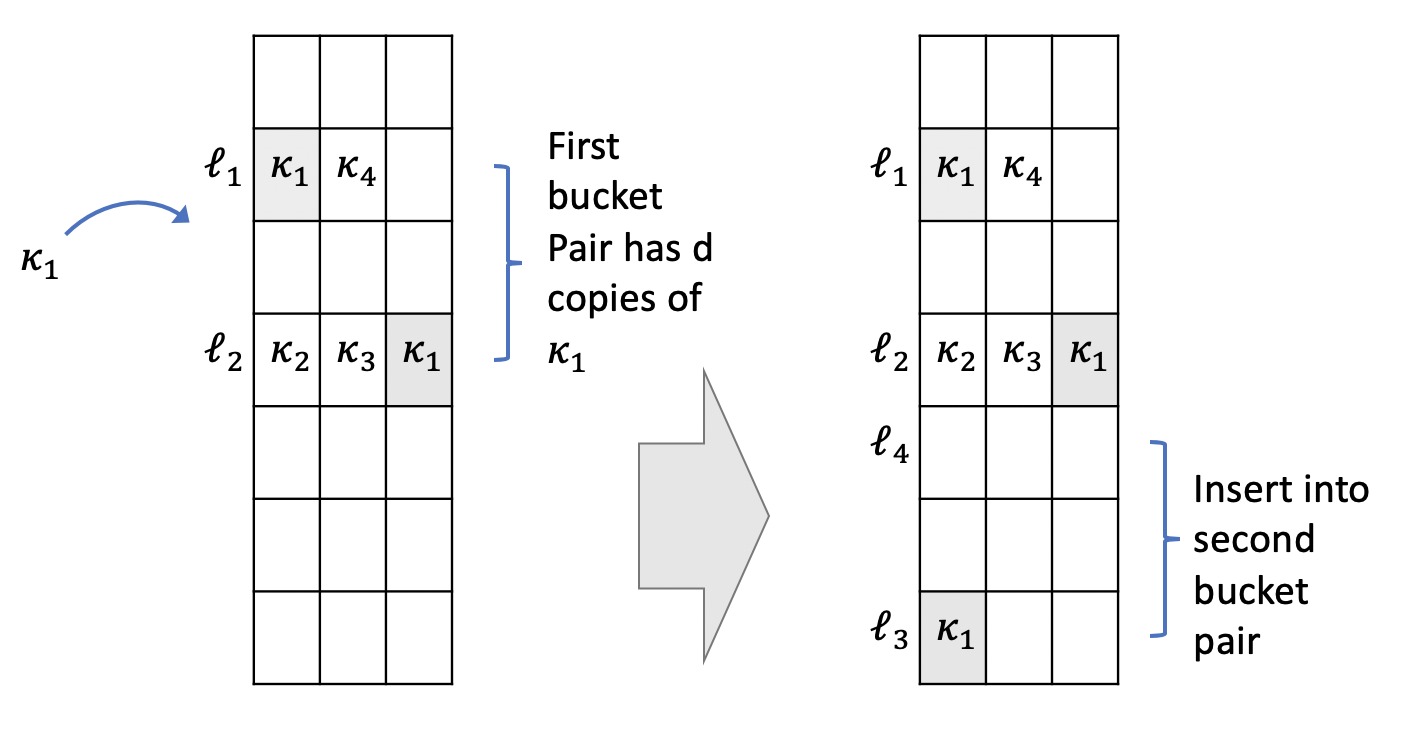}
	\caption{Illustration of chaining procedure }
	\label{fig:chaining}
\end{figure}

\begin{algorithm}[H]
	\begin{algorithmic}
		\State $\ell' \gets \ell \oplus h(\kappa)$
		\If{$(\kappa, \mathbf{\alpha}) \in H_\ell \cup H_{\ell'}$}
		\State \Return Success
		\ElsIf{$| \{(\kappa', \alpha') \in H_\ell \cup H_{\ell'} : \kappa' = \kappa \}| = maxDupes$}
		\State $\tilde{\ell} \gets h(\min\{\ell, \ell'\}, \kappa)$
		\State ChainInsert($\tilde{\ell}, \kappa, \mathbf{\alpha}, L_{max}-1$)
		\ElsIf{$H_\ell$ is not full}
		\State Insert $(\kappa, \mathbf{\alpha})$ into $H_{\ell}$
		\State \Return Success
		\EndIf
		
		\For{$i \gets 1$ to $\mathrm{MaxKicks}$}
		\If{$H_{\ell'}$ is not full}
		\State Insert $(\kappa, \mathbf{\alpha})$ into $H_{\ell'}$
		\State \Return Success
		\EndIf
		\State Pick random entry $i \leq b$
		\State Swap$(\kappa, \mathbf{\alpha})$ with $H_{\ell'}[i]$
		\State $\ell' \gets \ell' \oplus h(\kappa)$
		\EndFor	
		\State \Return Terminated
	\end{algorithmic}
	\caption{ChainInsert($\ell, \kappa, \mathbf{\alpha}, L_{max}$)}
	\label{alg:insert}
\end{algorithm}

When querying for a key $k$ and predicate $\pcal$, the next bucket pair is checked only if there are exactly $d$ entries with the given key fingerprint. In the case where $L_{max}$ bucket pairs are checked and the last bucket pair contains $d$ copies of the given key fingerprint, the query will return true  regardless of the predicate. Otherwise, it returns true only if the predicate finds a match in one of the attribute sketches for that key. 
The algorithm is summarized in algorithm \ref{alg:chainquery}.
The correctness of this is proven by the following lemmas.

\begin{algorithm}
	\begin{algorithmic}
		\For{$i \gets 1, \ldots, L_{max}$}
			\State $\ell' \gets \ell \oplus h(\kappa)$
			\If{$(\kappa, \mathbf{\alpha}) \in H_\ell \cup H_{\ell'}$}
				\State \Return True
			\ElsIf{$| \{\kappa' \in H_\ell \cup H_{\ell'} : \kappa' = \kappa \}| = d$}
				\State $\ell \gets h(\min\{\ell, \ell'\}, \kappa)$
			\Else
				\State \Return False
			\EndIf
		\EndFor
			\end{algorithmic}
	\caption{ChainQuery($\ell, \kappa, \mathbf{\alpha}, L_{max}$)}
	\label{alg:chainquery}
\end{algorithm}

\begin{lemma}
	Given any key fingerprint $\kappa$ and bucket $\ell$, let $\ell' = \ell \oplus h(\kappa)$
	be the alternative bucket. The total number of copies of $\kappa$ in buckets $\ell, \ell'$ only increases as items are inserted into a conditional cuckoo filter with chaining, and the total number is capped by the parameter $d$.
\end{lemma}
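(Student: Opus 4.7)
The plan is to prove the two conclusions (non-decrease and cap by $d$) separately by enumerating every operation in \texttt{ChainInsert} (Algorithm~\ref{alg:insert}) that can alter the number of entries with fingerprint $\kappa$ in $H_\ell \cup H_{\ell'}$, and examining the effect of each. The basic setup I would establish first is that by the partial-key cuckoo property, an entry with fingerprint $\kappa'$ sitting in bucket $b$ has alternate bucket $b \oplus h(\kappa')$, so the two candidate buckets of any such entry form an unordered pair uniquely determined by $b$ and $\kappa'$; in particular, the entries with fingerprint $\kappa$ whose candidate pair equals $(\ell,\ell')$ are exactly those with fingerprint $\kappa$ currently residing in $H_\ell \cup H_{\ell'}$.

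For monotonicity, I would observe that the only two operations that move data in or out of $H_\ell \cup H_{\ell'}$ are insertions and kicks; insertions are additive, so it suffices to show that kicks preserve the $\kappa$-count. A kick relocates an entry with some fingerprint $\kappa'$ from a bucket $b$ to $b \oplus h(\kappa')$. If $\kappa' \neq \kappa$, the $\kappa$-count is trivially unchanged. If $\kappa' = \kappa$ and $b \in \{\ell,\ell'\}$, then the destination $b \oplus h(\kappa)$ is the other member of $\{\ell,\ell'\}$, so the entry stays in $H_\ell \cup H_{\ell'}$; if $b \notin \{\ell,\ell'\}$, the entry is neither removed from nor added to the pair. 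Hence the $\kappa$-count is invariant under kicks and non-decreasing overall.

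For the cap, I would trace every code path in \texttt{ChainInsert} that writes a fingerprint-$\kappa$ entry into $H_\ell$ or $H_{\ell'}$: either the direct placement when $H_\ell$ has an empty slot, or the termination of the kick cycle in a free slot of $H_{\ell'}$. Both writes occur inside a call to \texttt{ChainInsert} whose incoming fingerprint is $\kappa$ and whose target pair is $(\ell,\ell')$, and every such call is guarded at the top by the test $|\{(\kappa',\alpha') \in H_\ell \cup H_{\ell'} : \kappa' = \kappa\}| = \texttt{maxDupes}$. When that test holds, the write is redirected to a fresh pair $(\tilde\ell, \tilde\ell \oplus h(\kappa))$ with $\tilde\ell = h(\min\{\ell,\ell'\},\kappa)$, rather than being executed on $(\ell,\ell')$. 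Consequently no insertion ever writes a $(d{+}1)$-st copy of $\kappa$ into the pair.

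The one subtlety I expect is the possibility that a chained insertion originating from an unrelated pair $(\ell'',\ell''')$ collides, via the hash $h(\min\{\ell'',\ell'''\},\kappa)$, back onto $(\ell,\ell')$, so that the pair receives writes from chains it did not initiate. I would handle this uniformly by noting that any such recursive entry into \texttt{ChainInsert} is subject to the same cap check at the top of the routine; hence it either finds room (count $<d$) and proceeds, or it is itself redirected one more link down the chain. Once this observation is in place, both claims reduce to a direct case analysis of the pseudocode and require no further machinery.
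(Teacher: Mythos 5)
Your proof is correct and follows essentially the same route as the paper's: an inductive/operational case analysis showing that the XOR relation keeps any kicked $\kappa$-entry inside its own bucket pair (so the count never decreases) and that the $d$-copies guard at the top of \texttt{ChainInsert} redirects any would-be $(d{+}1)$-st copy down the chain. Your treatment is in fact slightly more careful than the paper's, since you explicitly handle the case where a chained insertion hashes back onto the original pair, but the underlying argument is the same.
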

\begin{proof}
	We can prove this by induction. This trivially holds in the base case where the conditional cuckoo filter is empty.
	Upon insertion of a new item, the only way the number of copies of $\kappa$ can decrease is if an entry containing $\kappa$ is kicked out by a cuckoo kick operation. Assume WLOG that $\kappa$ was in bucket $\ell$.
	If it is kicked out by a row with key fingerprint not equal to $\kappa$ then the number of copies of $\kappa$ in $\ell, \ell'$ must be strictly less than $d$. Therefore, it must be reinserted into bucket $\ell'$ and the invariance holds.
	If it is kicked out by another row with key $\kappa$, then if there were $< d$ copies of $\kappa$ in $\ell, \ell'$ before, there are still $<d$ copies and it must similarly be reinserted into $\ell'$. This increases the count of copies of $\kappa$ but it cannot exceed $d$.
	If there were $d$ copies before, then there are still $d$ copies after the kick operation. The chaining operation ensures that $\kappa$ is not reinserted into the pair $\ell, \ell'$, so the cap is preserved.
	Consider a possible insertion for fingerprint $\kappa$ into either $\ell$ or $\ell'$. WLOG assume it is a possible insertion to $\ell'$.
	The alternative bucket is $\ell' \oplus h(\kappa) = \ell$. An insertion will check if the cap would be preserved and move to the next bucket pair if it is not. 
\end{proof}

\begin{lemma}
	\label{lem:d-chain}
	Let $k, \mathbf{a}$ be a key, attribute vector pair and $C$ a conditional cuckoo filter with chaining. There is a fixed sequence of buckets $\ell_1, \ell_2, ... \ell_n$ with $n \leq L_{max}$  that the sketched entry $\kappa, \mathbf{\alpha}$ corresponding to this pair can be inserted into. Furthermore, if it is into $\ell_i$, then all pairs $\ell_{2j}, \ell_{2j-1}$ with $2j-1 < i$ contain $d$ copies of $\kappa$.
\end{lemma}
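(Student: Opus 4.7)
The plan is to decompose the lemma into two separate assertions: (i) the sequence $\ell_1, \ldots, \ell_n$ is a deterministic function of $k$ and $\kappa$, independent of the current state of the filter, and (ii) whenever the sketched entry is found in $\ell_i$, every earlier pair in the chain holds $d$ copies of $\kappa$. Both parts fall out of a careful reading of \textsc{ChainInsert}, combined with the preceding lemma on monotonic, capped-at-$d$ counts.

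For (i), I would unroll the recursion of \textsc{ChainInsert}. The starting bucket is $\ell_1 = h(k)$ and its XOR partner is $\ell_2 = \ell_1 \oplus h(\kappa)$, a deterministic function of $\ell_1$ and $\kappa$ alone. Whenever the recursive branch fires, the next starting bucket is $\ell_3 = h(\min\{\ell_1, \ell_2\}, \kappa)$, whose XOR partner is $\ell_4 = \ell_3 \oplus h(\kappa)$. Since $\min\{\ell_1,\ell_2\}$ and $\kappa$ are already determined by the first pair, $\ell_3$ and $\ell_4$ are too. A straightforward induction over the recursion depth then shows every $\ell_i$ is a deterministic function of $k$ and $\kappa$, and the recursion is cut off after $L_{\max}$ pairs by the decrementing counter, so $n \leq L_{\max}$.

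For (ii), I would induct on the pair index reached by the recursion. Inspecting the algorithm, \textsc{ChainInsert} recurses to the next pair \emph{only} on the branch where $|\{\kappa' \in H_\ell \cup H_{\ell'} : \kappa' = \kappa\}| = d$. Therefore, if the insertion ultimately lands in pair $m = \lceil i/2 \rceil$, then at the moment of each earlier call the corresponding pair must have contained exactly $d$ copies of $\kappa$. Applying the previous lemma (monotonicity and the cap of $d$ per XOR pair) then gives that those pairs continue to hold exactly $d$ copies of $\kappa$ at any later time, which is precisely the conclusion for every $j$ with $2j-1 < i$.

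The main obstacle is ruling out the possibility that a cuckoo kick could transport an entry with fingerprint $\kappa$ from one pair in the chain into a different pair, which would undermine the pair-by-pair count-preservation argument. This is handled by observing that the kick loop updates the probed bucket via $\ell' \gets \ell' \oplus h(\kappa')$, where $\kappa'$ is the fingerprint of whichever entry was just displaced; this operation keeps the displaced entry inside the same XOR pair it was already in, and the only way to move to the next chained pair is the top-level \textsc{ChainInsert} call. Hence every entry with fingerprint $\kappa$ inserted through pair $j$ stays in pair $j$, the previous lemma's invariant applies pair-by-pair, and the chain of implications above goes through.
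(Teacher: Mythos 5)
Your proposal is correct and follows essentially the same route as the paper: you define the same deterministic recursion $\ell_1 = h(k)$, $\ell_{2j} = \ell_{2j-1}\oplus h(\kappa)$, $\ell_{2j+1} = h(\min\{\ell_{2j-1},\ell_{2j}\},\kappa)$ to establish the fixed sequence, and you prove the $d$-copies invariant by induction using the monotonicity/cap lemma, including the same observation that a kicked entry stays within its own XOR pair. Your explicit treatment of the kick-transport issue is a slightly more careful spelling-out of what the paper compresses into one sentence, but the argument is the same.
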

\begin{proof}
	The sequence is given by $\ell_1 = h_{\ell}(k)$, $\ell_{2j} = \ell_{2j-1} \oplus h(\kappa)$ and $\ell_{2j+1} = h_b(\min\{\ell_{2j-1}, \ell_{2j}\}, \kappa)$. This recursion is defined by only $k$ for the first bucket and only by $\kappa$ and the previous bucket for all other buckets. It ends if there is a cycle or the maximum chain length $L_{max}$ is reached.
	The other invariant holds by induction. It trivially holds for the empty filter.
	Consider $i'$ even. Based on the lemma above, an entry containing $\kappa$ in $\ell_{i'}$ can only move to $\ell_{i' + \Delta}$ 
	if there were already $d$ copies of $\kappa$ in $\ell_{i'-1}, \ell_{i'}$. This hold regardless of whether the move is due to a simple insertion or from being first kicked out. Thus, the invariant is preserved. Similarly, it is preserved when $i'$ is odd.
\end{proof}

\begin{theorem}
	The conditional cuckoo filter with chaining returns no false negatives.
\end{theorem}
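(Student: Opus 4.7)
The plan is to prove that any inserted pair $(k,\mathbf{a})$ is recovered by \texttt{ChainQuery} by aligning the fixed bucket sequence guaranteed by Lemma \ref{lem:d-chain} with the sequence actually traversed by the query algorithm. First I would note that the query algorithm, on input $(\kappa, \ell_1)$ with $\ell_1 = h_\ell(k)$, deterministically generates the bucket sequence $\ell_1, \ell_2 = \ell_1 \oplus h(\kappa), \ell_3 = h(\min\{\ell_1,\ell_2\},\kappa), \ldots$ — exactly the same recursion described in Lemma \ref{lem:d-chain}. Hence insertion and query traverse an identical chain of bucket pairs.

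Next, fix any pair $(k,\mathbf{a})$ that was successfully inserted, let $(\kappa,\mathbf{\alpha})$ be its sketched form, and let $\ell_i$ be the bucket in which it currently resides. By Lemma \ref{lem:d-chain}, every earlier bucket pair $(\ell_{2j-1}, \ell_{2j})$ with $2j-1 < i$ contains $d$ copies of $\kappa$. I would then argue by induction on the loop index $i' = 1, \ldots, \lceil i/2 \rceil$ that \texttt{ChainQuery} reaches the chain link containing $\ell_i$ without returning False. At each earlier iteration, the algorithm examines $H_{\ell_{2j-1}} \cup H_{\ell_{2j}}$, finds exactly $d$ copies of $\kappa$ there (by the lemma), and therefore takes the \texttt{elsif} branch advancing to $\ell_{2j+1}$ rather than returning False.

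Once the query reaches the chain link containing $\ell_i$, the entry $(\kappa,\mathbf{\alpha})$ lies in $H_{\ell_{2j-1}} \cup H_{\ell_{2j}}$ for the appropriate $j$, so the first conditional $(\kappa,\mathbf{\alpha}) \in H_\ell \cup H_{\ell'}$ fires and the algorithm returns True. I would finally observe that the matching step tests equality of the sketched attributes (since the stored entry is literally $(\kappa,\mathbf{\alpha})$, not $(k,\mathbf{a})$), so the \emph{sketched} pair always produces True; this is exactly the no-false-negative guarantee at the sketch level, consistent with the standard cuckoo-filter convention that false positives may arise only from collisions.

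The main obstacle is verifying the interplay between the insertion invariant and the query's termination rule: I need to be sure that \emph{every} prior chain link has attained the full count of $d$ duplicates at the moment of the query, not merely at the moment of the original insertion. This follows from the monotonicity half of the first lemma (the count of copies of $\kappa$ in a given pair only increases over time), so the $d$-copy condition established at insertion persists, and \texttt{ChainQuery} cannot terminate early. The remaining bookkeeping — that kicks, chain jumps, and the deterministic choice of $\ell_{2j+1}$ via the symmetric hash of $\min\{\ell_{2j-1},\ell_{2j}\}$ all preserve placement on the same chain — is already handled by the two preceding lemmas.
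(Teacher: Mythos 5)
Your argument for the case where the row $(k,\mathbf{a})$ was actually placed into some bucket is correct and is essentially the same argument the paper makes (via Lemma~\ref{lem:d-chain} plus the monotonicity/cap invariant of the first lemma): the query walks the same deterministic chain as the insertion, every earlier pair holds exactly $d$ copies of $\kappa$ so the query cannot return False early, and the entry $(\kappa,\mathbf{\alpha})$ is found when the query reaches its link. You are right that the persistence of the $d$-copy condition after later insertions and kicks is what needs checking, and the first lemma supplies it.

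However, you have a genuine gap: you prove the claim only for pairs that were \emph{successfully inserted}, and the theorem must also cover rows that were \emph{discarded} because the chain reached its maximum length $L_{max}$. The insertion procedure gives up on a row once $L_{max}$ bucket pairs have been exhausted, so such a row has no entry anywhere in the sketch, and your ``find the stored entry'' argument says nothing about it. The reason there is still no false negative is a deliberate design feature that your proof never invokes: when \texttt{ChainQuery} traverses all $L_{max}$ bucket pairs and the last pair still contains $d$ copies of $\kappa$, it returns True \emph{regardless of the predicate}. A row is only discarded when every pair on its chain is saturated with $d$ copies of $\kappa$, which is exactly the condition under which the query exhausts the chain and defaults to True. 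The paper's proof is organized around precisely this dichotomy --- ``either inserted into some bucket or discarded because the maximum chain length is exceeded; both conditions return true'' --- and the second branch is the one your proposal is missing. Without it, the no-false-negative guarantee does not follow.
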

\begin{proof}
	If a key $k$ and predicate $\pcal$ have some matching row $k, \mathbf{a}$, then that row must either be inserted into some bucket in the sketch or discarded because the maximum chain length is exceeded. The lemma above ensures that a query either finds the corresponding entry or reaches the maximum chain length. Both conditions return true.
\end{proof}

Note that generating alternative bucket pairs may result creating a cycle of bucket pairs. In this case, the insertion procedure will not fail but will not generate $L_{max}$ unique bucket pairs. To further improve the chaining procedure, such cycles can be detected and the chain can be extended. We can do this using Floyd's cycle detection algorithm.

Like Bloom filter attribute sketches, the chaining method can also support predicate only queries. However it cannot simply erase entries with non-matching attribute values. This could introduce gaps in a chain where some bucket pair does not contain $d$ copies of a fingerprint. A query could improperly stop probing bucket pairs early and yield false negatives. Instead, the sketch must keep the key fingerprint and use an additional bit to mark the entry as non-matching.

\section{False Positive Rates}
The accuracy of CCF's and other approximate set membership sketches can be measured by the false positive rate (FPR). Unlike regular cuckoo filters, the FPR for CCFs is not a constant.
Queries can result in false positives due to  spurious matches on the key fingerprint, on the attribute sketch, or both.
Because of this, the FPR depends on the distribution of the underlying data and the query itself. 
We provide some simple bounds on the FPR expressed in these relevant quantities.

\subsection{Key only queries}
For a standard set membership query only a key with no predicates, the CCF has a FPR similar to a regular cuckoo filter.
When using Bloom filter attribute sketches, the CCF is, in fact, identical to a cuckoo filter when the attribute sketches are ignored. Thus, they have exactly the same FPR as cuckoo filters on key only queries. We show that all the variations of the CCF are governed by a bound of the same form for key only queries. 

The typical bound on the FPR on a cuckoo filter is given by a union bound. A key $k$ that was not inserted into the sketch generates a random key fingerprint $\kappa$ with $|\kappa|$ bits. The probability that $\kappa$ randomly matches any given entry in the sketch is $2^{-|\kappa|}$. Summing over the entries that can be probed by a query gives a union bound on the FPR.
A slight refinement of the typical bound for the FPR sums this probability for a spurious match over the number of non-empty entries $D$ in the key's bucket pair rather than all $2b$ entries.
This gives a bound on the FPR for key only queries using Bloom attribute sketches as
\begin{align}
	FPR^{key} &\leq \E D 2^{-|\kappa|}
\end{align}

For a CCF using Bloom conversion, only entries with distinct key fingerprints in the bucket pair need to be counted. Letting $D$ represent this distinct count gives the same bound as above.
We note that although the form of the bound is the same, this does not imply that a similarly sized CCF using Bloom conversion yields a smaller FPR. To avoid insertion failures, a sketch using Bloom conversion must contain more entries as it stores duplicates.

For a CCF using chaining, lemma \ref{lem:d-chain} shows that the chain is irrelevant for key only queries. Only the first bucket pair must be checked. 
Thus, the formula for the FPR is also the same as above. This result is of interest since although insertions can probe up to $2 L_{max}$ buckets, there is no penalty for probing more buckets 
at query time.

\subsection{Key and predicate queries}
CCF's are distinguished by their ability to answer queries for a key and predicates.
Consider a query for a key $k$ and equality predicates  $A_i=a_i$ for attributes $i=\ical$. Further suppose
there are no matches on the full data so that a positive return value is a false positive.
The probability that a CCF returns true can be decomposed as
\begin{align}
	p( (k,\pcal) \in H) &= p(k \in H) p( \pcal \in H[k] | k \in H) 
\end{align}
where $k \in H$ or $(k,\pcal) \in H$ denotes the event that the filter returns true for the given query,
$H[k]$ denotes the entries that a query involving $k$ would probe and which contain the key fingerprint $\kappa$, and $\pcal \in H[k]$ denotes that there is a match for the predicate among those entries.

Consider the following cases
\begin{itemize}
	\item The key $k$ is not in the data
	\item The key $k$ is in the data, but there is no match for the predicate 
\end{itemize}
In the former case, the FPR is trivially bounded by $p(k \in H)$ which is upper bounded in the previous subsection. An upper bound on the FPR of $\leq 5\%$ can be achieved with a key fingerprint size of $8$ and 6 buckets per entry. 
This bound holds for all variations of the CCF.

In the latter case, $p(k \in H) = 1$ since the key is in the data and there are no false negatives. A false positive occurs if there is a spurious match for  the predicate $\pcal$
among the entries in $H[k]$.
The FPR is thus the probability the predicate matches on an attribute sketch, $p( \pcal \in H[k] | k \in H) = p(\pcal \in H[k])$. 

The FPR differs depending on the attribute sketch used.
For Bloom filter attribute sketches, the FPR is 
\begin{align}
	p_{Bloom}(\pcal \in H[k]) &= \rho_k^v
\end{align}
 where $\rho_k$ is the FPR of the Bloom filter for key $k$ and $v$ is the number of attribute values which were never inserted into the Bloom filter. 
 
Note that when the predicate tests for the co-occurrence of multiple attributes, 
the FPR can be 1 as described in section \ref{sec:bloom attributes}.  
The standard formula for the FPR for a Bloom filter is $\rho \approx (1-(1-h/s)^{n})^h \approx(1-exp(-hn/s))^h$ where $h$ is the number of hash functions used by the Bloom filter, $s$ is the number of bits, and $n$ is the number of unique attribute name, value pairs that are added to the Bloom filter. In this case where the Bloom filters are small, this approximation is an underestimate of the FPR \cite{bose2008false}.

When using attribute fingerprints, the probability of a spurious match on one entry has a similar form, 
$p(\pcal \in H[k]_i) = \tilde{\rho}_k^{\tilde{v}}$ where $H[k]_i$ is
 the $i^{th}$ entry that is checked with fingerprint $\kappa$. This entry corresponds to some input row that was added to the filter.
Here, $\tilde{V}$ is the number of attributes in the predicate which do not match the underlying data row's attributes and
$\tilde{\rho} = 2^{-|\alpha|}$ where $|\alpha|$ is the size of the attribute fingerprint used.
In this case there are a maximum number of $d L_{max}$ entries checked, so the overall FPR is bounded by
\begin{align}
	p_{chained}(\pcal \in H[k]) &\leq d L_{max} \E 2^{-|\alpha| \tilde{V}}.
\end{align}
Similar to the key-only query case, the FPR does not depend on the total number of entries that are probed at insertion time but only on a smaller subset containing the key fingerprint $\kappa$.

For Bloom conversion, the FPR depends on whether the entry for the key $k$ has been converted to a Bloom filter.

Although the formulae are upper bounds on the expected FPR, figure \ref{fig:predicted fpr} shows they are reasonably good predictors of the actual FPR. 

\begin{figure}[H]
\vspace{-0.2cm}
  \centering
	\includegraphics[width=.8\textwidth]{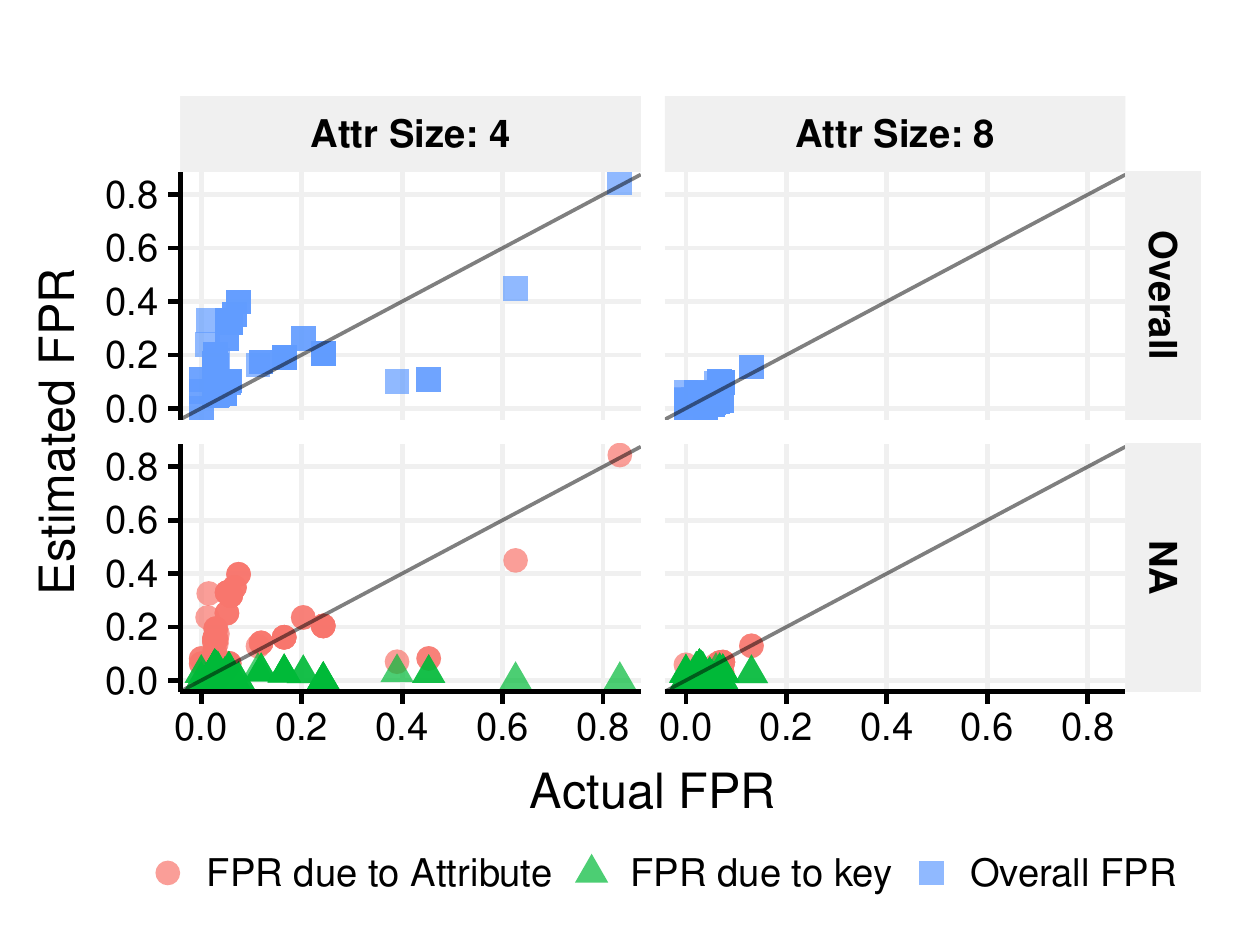}
	\caption{The bounds on the expected FPR are good predictors of the actual FPR when using attribute fingerprints. }
	\label{fig:predicted fpr}
\vspace{-0.2cm}
\end{figure}

\section{Size and parameter choice}
\label{sec:param choice}
Conditional cuckoo filters have more parameters than regular cuckoo filters. Cuckoo filters only require setting the number of buckets $m$ and the number of entries per bucket $b$. CCF's can additionally require setting the maximum number $d$ of duplicates per bucket pair, $L_{max}$ the maximum chain length, and any additional parameters required by the attribute sketch.

We derive an upper bound on the number of non-empty entries and show through experiments that the attainable load factor is a constant that is insensitive to the underlying data. Together these can be used to size the sketch.
These bounds and constants depend on the parameters $d$ and $L_{max}$ which affect the number of duplicates stored in the sketch.

Denote the total number of distinct keys by $n_k$ and the number of non-zero entries in a CCF by $Z'$.
Since a CCF with Bloom attribute sketches has the same non-empty entries as a regular cuckoo filter, the number of non-zero entries can be upper bounded by $n_k$.
For other cases, let $r_k$ be the number of duplicates for key $k$ that have distinct attribute values. Bloom filter conversion will allocate a maximum of $\max\{d, r_k\}$ entries for that key. Let $A = r_X$ for a randomly chosen key $X$. Then the expected number of used entries  is bounded by $\E Z' \leq n_k \E \min\{A, d\}$.
Similarly, a CCF with chaining uses at most $d L_{max}$ entries for a single key, so the expected number of entries used is bounded by 
$\E Z' \leq n_k \E \min\{A, d L_{max}\}$. These sizes are summarized in table \ref{tbl:ccf types}. Figure  \ref{fig:filter sizing} shows that the bound on the number of entries needed closely matches the actual number needed.

\begin{table}[H]
  \centering
	\begin{tabular}{l|ccc|l}
		& \multicolumn{3}{c|}{Queries} &  \# non-empty entries\\ 
		Filter & $k$ & $(k, \pcal)$ & $\pcal$ & (upper bound) \\ \hline
		Cuckoo filter & \cmark &  &  & $n_k$ \\
		CCF w/ Bloom & \cmark & \cmark & \cmark & $n_k$ \\
		CCF w/ conversion & \cmark & \cmark & \cmark & $n_k \,  \E \max\{A, d\}$\\
		CCF w/ chaining & \cmark & \cmark &  &  $n_k\, \E \max\{A, d L_{max}\}$ \\
		\end{tabular}
	\caption{Supported queries and sizing for different conditional cuckoo filters. $n_k$ is the total number of distinct keys and $A$ is the number of distinct attribute vectors associated with a randomly chosen $k$.}
	\label{tbl:ccf types}
\end{table}

\begin{figure}[H]
\vspace{-0.2cm}
  \centering
	\includegraphics[width=.8\textwidth]{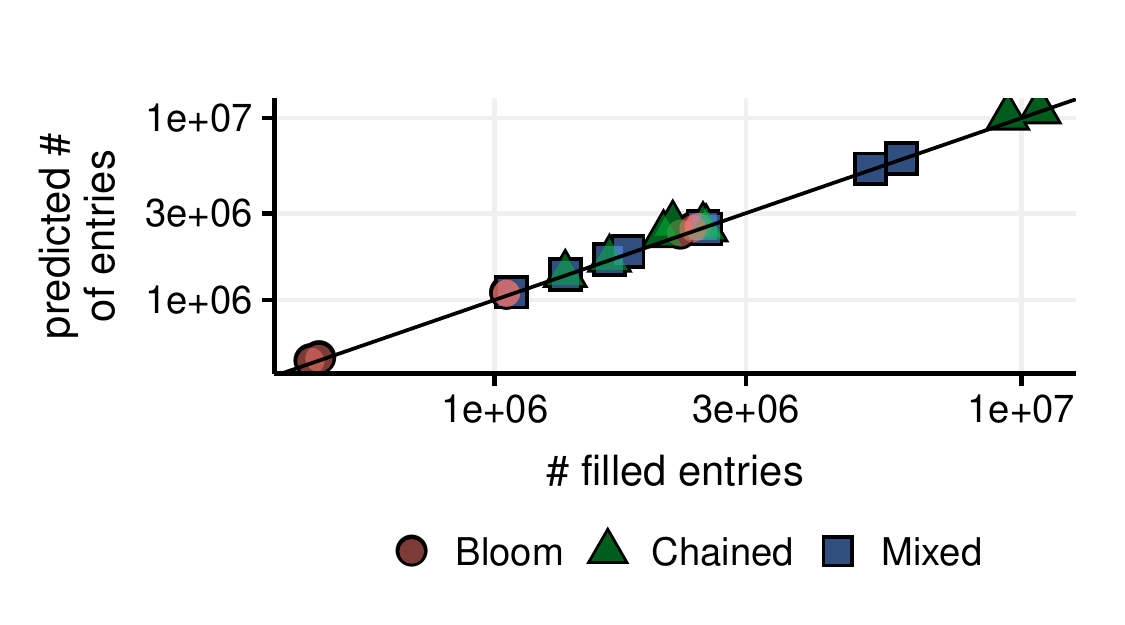}
	\caption{The predicted number of entries needed for each filter closely match the actual number for the workload.}
	\label{fig:filter sizing}
\vspace{-0.2cm}
\end{figure}

From experiments on chaining, we find a reasonable rule of thumb for setting the number of entries per bucket $b$ is to take $b \approx 2d$. This way, at least 4 keys can be stored in a bucket pair $\ell,\ell'$. Typically more keys can be stored since only 1 key fingerprint will use that particular pair of bucket locations. This provides relatively high load factors while ensuring buckets are not costly to scan. Figure \ref{fig:multiset dupes} shows that a setting of $b=4$ that is typical for cuckoo filters achieves a load factor of around $\beta \approx 75\%$ regardless of the number of duplicate keys. A slightly larger value of $b=6$ achieves a load factor close to $\beta \approx 87\%$ even when there are many duplicates. 
An appropriate estimate for the required size for a CCF is $m \cdot b \approx \E Z' / \beta$.

When the sketch is optimally sized, figure \ref{fig:multiset fpr/maxdupes} shows that lower settings for $d$ tend to achieve better use of bits. This is primarily due to smaller values of $d$ yielding higher load factors. Given a fixed setting for the number of bins $m$ and entries per bin $b$, we found the best $d$ is the largest under which all insertions pass. We chose $d=3$ which provides small bucket sizes and a good load factor. 

\subsection{Attribute sketch parameters}
Figure \ref{fig:overall RF} shows the performance of different CCF's under various parameter choices.
Generally, we also found increasing the attribute sketch size more beneficial than increasing the key fingerprint size. Figure \ref{fig:predicted fpr}  shows that at small attribute sketch sizes, the false positives are primarily due to bad matches on the attribute sketches. In this case, a false positive $k, \mathbf{a}$ is attributed to a key if the key is not in the sketch, in other words, $k \not \in H$.
Otherwise, it is easily attributed to the attribute.
We also found small values for the number of hash functions used by Bloom filters to be preferable as it does not become filled with ones too quickly.

\section{Additional optimizations}
\label{sec:optimizations}
The chaining method lends itself to several optimizations. 

{\bf Storage:} 
When using attribute fingerprint vectors, the CCF is an open addressing hash table, and can be directly stored as such. Furthermore, attribute fingerprints can be stored on disk in a columnar format so that at query time, only the relevant predicates need to be read.

{\bf Small values:}
The attribute values themselves are often stored as small integers. One optimization is to hash only values $\geq 2^{|\kappa|}$. This way, all small values can be represented exactly.

{\bf Attribute compression:}
More accurate CCF's can be constructed using a two-stage process. First, construct a CCF with chaining using large attribute fingerprints. A compressed CCF can be constructed by mapping large attribute fingerprints to smaller ones while minimizing the number of collisions.

\subsection{Range queries}
\label{sec:range queries}
Conditional Cuckoo filters can be extended to support range predicates using standard techniques.
Given a column with a range predicate, one simple method is to bin the column into a small number of bins. A range predicate can then be converted into a small in-list. The disadvantages of this approach are that the binning process introduces error and that long ranges must check more bins. Since each bin that does not contain a true match can return a false positive, both can increase the FPR. 
Another method uses a standard approach of using a dyadic expansion over the range $[a_0, b_0]$ of the column. 
An item $x$ can be represented as a sequence of intervals $[a_1, b_1], \ldots [a_\eta, b_\eta]$
with exponentially decreasing lengths $b_{i+1} - a_{i+1} = (b_i - a_i)/2$ down to some final granularity
$b_\eta-a_\eta$. This requires $\eta$ insertions into a CCF for each item, and a range query likewise requires querying for the existence of up to $\eta$ intervals that cover the range.  We use the simpler binning approach in our experiments.

\vspace{-0.1cm}
\section{Experiments} \label{sec:experiments}
We consider two experiments. One on synthetic data examines the ability of the chaining procedure to store duplicate keys while obtaining a high load factor. The second examines the ability of the CCF to reduce output sizes on a join benchmark on real world data  \cite{kipf2019learned}.

\subsection{Multiset experiments}
\label{sec:multiset experiment}
Our experimental results show chaining dramatically improves the ability of a cuckoo filter to handle duplicate keys.
We simulated key frequencies using either a truncated Zipf-Mandelbrot distribution or a stream where every key has the same number of duplicates.

As the mean frequency of each key increases, figure \ref{fig:multiset dupes} shows that inserting into a regular cuckoo filter fails at lower load factors. Chaining allows the filter to maintain high load factors.
When the input has a Zipf-Mandelbrot distribution, the regular cuckoo filter fails extremely quickly. 

The setup for the multiset experiments are as follows.
 The chaining cuckoo hash parameter for the maximum number of duplicates per bucket pair is
 $d=3$, and the maximum number of bucket pairs for a key $L_{max} = \infty$ is uncapped.
 For each filter type and each setting for the average number of duplicates per key in the input data, we generate a dataset that is approximately 20\% larger than the capacity of the sketch and measure the number of items processed before the first failed insertion and the load factor at that point. A failed insertion here is the first time a unique key, attribute pair is not found in the sketch but fails to generate a new entry in the sketch. The order of items is randomly permuted.
 For the Zipf-Mandelbrot distribution with a mass function of the form $p(x) \propto (c + x)^{-\alpha}$, we fix the offset $c$ to be 2.7 and truncate the range to be in $[1,500]$. We vary $\alpha$ to obtain the desired average number of duplicates per key. We use the additional cycle detection and resolution method in these experiments. The results are averaged over 20 runs using random salts for the hash functions.
 
 \subsection{Multiset results}
 Figure \ref{fig:multiset dupes} shows the behavior of a regular multiset cuckoo filter and cuckoo filter with chaining as the number of duplicates per key is varied. The chained cuckoo filter is able to achieve roughly the same load factor regardless of the number of duplicates. In contrast, the plain cuckoo filter's ability to achieve a high load factor quickly decreases. For Zipf-Mandelbrot data, the plain cuckoo hash encounters very few items before it fails. 

 The cuckoo filter quickly encounters more copies of these keys than it can store.
  When the number of real duplicates per key matches $d$, the chained and plain cuckoo filters achieve similar load factors. However, the chained cuckoo filter has a slightly worse FPR as it inspects an extra bucket pair.

\begin{figure}[H]
  \vspace{-0.1cm}
  \centering
	\includegraphics[width=.8\textwidth]{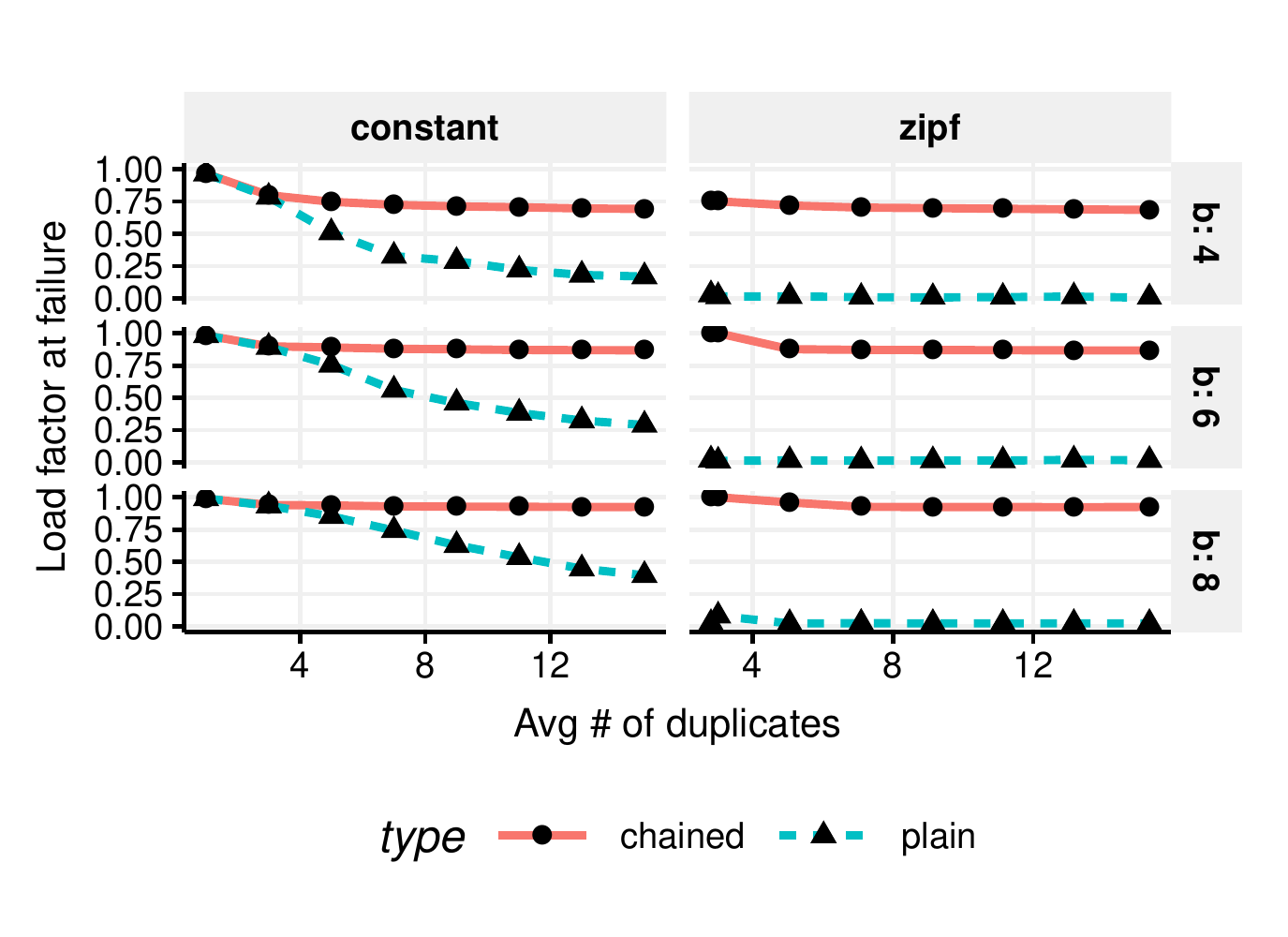}
	\caption{Chaining delays the first failed insertion and enables high load factors. A regular cuckoo filter achieves reasonable load factors only when the maximum number of replicates for a key is small. }
	\label{fig:multiset dupes}
  \vspace{-0.1cm}
\end{figure}
 
 Figure \ref{fig:multiset fpr/maxdupes} shows the efficiency of the chained cuckoo filter. We define the bit efficiency of the filter to 
 be 
 \begin{align}
 	\mathit{Efficiency} &\defn \frac{\mathrm{sketch\ size\ in\ bits}}{n \log_2 1/\rho} 
 \end{align}
where $n$ is the total number of keys inserted and $\rho$ is the FPR.
When all keys are distinct, a sketch with an efficiency of 1 cannot be improved since it matches the information theoretic lower bound given by the denominator. 
An optimized chained cuckoo filter obtained a bit efficiency of $\approx 1.93$ when all keys have the same number of duplicates $> d$.
In comparison, a Bloom filter has a bit efficiency of $\approx 1/\log 2 \approx 1.44$.
As shown in figure \ref{fig:multiset dupes}, a cuckoo filter can have arbitrarily bad bit efficiency when encoding multisets.
For sets, a cuckoo filter with the semi-sorting optimization has a bit efficiency 
of $\approx \beta^{-1} + 2 \beta^{-1} / \log_2 (1/\rho) \approx 1.37$ with a load factor of $95\%$ and FPR of $\rho = 1\%$.
Without the semi-sorting optimization, it has a bit efficiency of $\approx 1.53$ for the same FPR.
Thus, in the presence of many keys with duplicates, the efficiency decreases by a modest amount.

\begin{figure}[H]
  \vspace{-0.1cm}
  \centering
	\includegraphics[width=.8\textwidth]{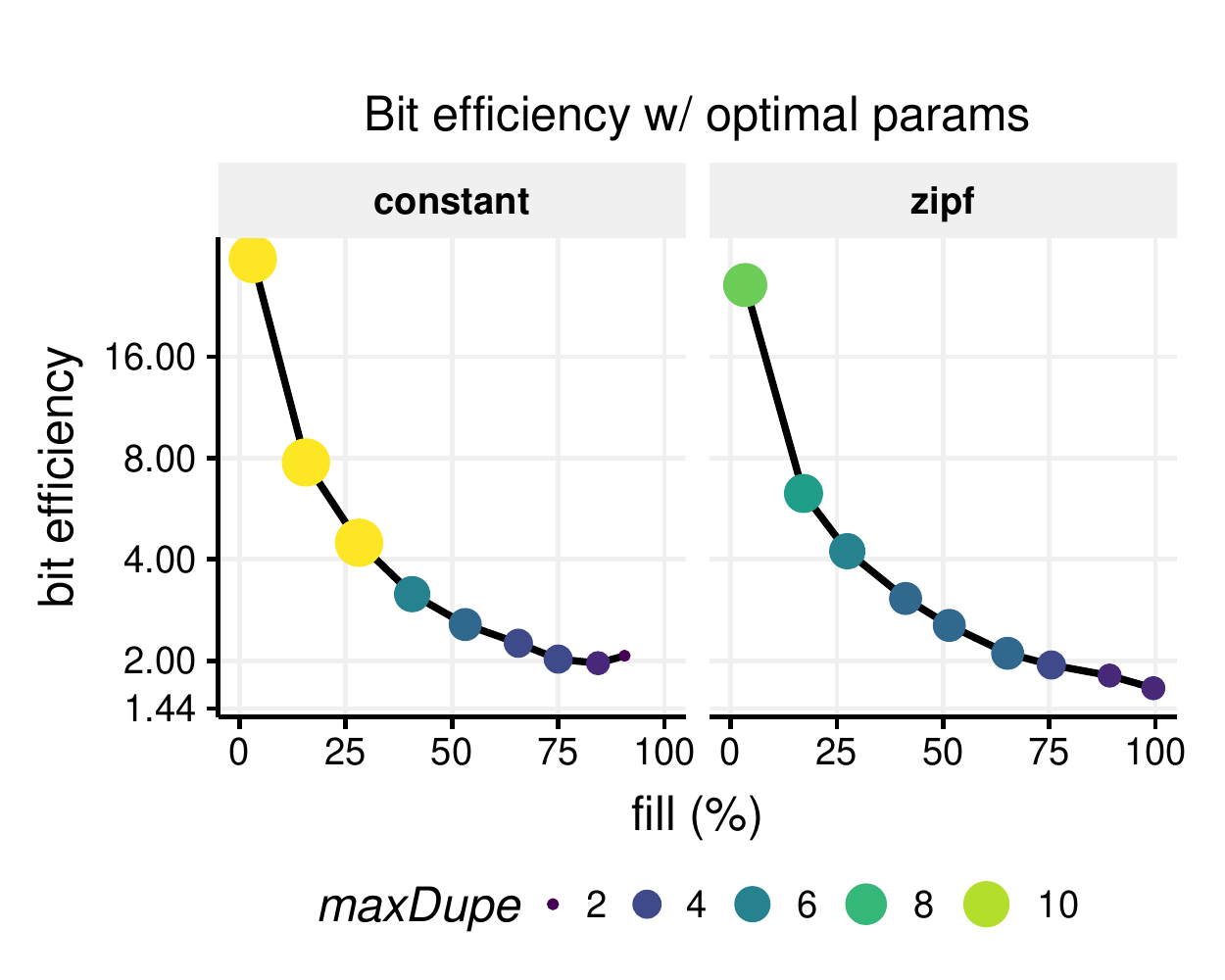}
	\caption{Higher load factors with small values for $d = $ maxDupe generally make better use of space. 
		While a very low value of 2 duplicates can achieve higher load factors, it can result in higher FPRs which make the sketch less efficient. }
	\label{fig:multiset fpr/maxdupes}
  \vspace{-0.1cm}
\end{figure}

\subsection{JOB-light experiments}
Our second set of experiments evaluate the efficacy of using CCF's for join processing on a real world dataset.
We evaluated the CCF's ability to reduce the output of scan operators using the JOB-light
workload~\cite{Kipf2018LearnedCE} for varying sizes of CCF's. This workload is derived from the Join Order Benchmark
(JOB)~\cite{leis2015good} that is used to evaluate the quality of query
optimizers. 
For a distributed system, the reduction factor measures how proportion of tuples are sent over the network, or in non-distributed hash joins, how much smaller the hash table sizes are. The CCF size is the space cost needed to achieve this reduction. This allows us to perform a comparison against the naive approach of simply building a hash table for each table without applying CCF pre-filtering.

The JOB-light workload consists of 70 queries, each joining up to 5 large
tables from the IMDB dataset.
Among these 70 queries are 237
instances of tables that qualify for matching join keys and predicates in at least one CCF,
effectively a semijoin reducer \cite{bernstein1981query}.
This reduction in the number of rows by semijoin(s) 
forms the basis for the following analysis based on a row reduction factor.

Given a query and a table in that query, we wish to determine the minimum sized output that a scan operator on that table can provide and compare it to an operator that only applies the predicates on that table.
The minimum size output for the scan operator is produced by converting joins of this base table to other tables to semijoins, which only check if the key exists in the other tables after applying predicates.
To apply a CCF with predicates, for a row with key $k$, each relevant CCF is queried for $(k, \pcal)$ where $\pcal$ is the predicate from the query.
We define the Reduction Factor (RF) to be
\begin{align}
  \mathit{Reduction\ Factor} &\defn \frac{M_{\mathit{semijoin}}}{M_{\mathit{predicate}}}
\end{align}
where $M_{\mathit{semijoin}}$ is the number of base table rows that both match the given predicates on that table and all other CCF's.
The value $M_{\mathit{predicate}}$ is the total number of rows in the base table that match the given predicates with no additional information from other tables.
The concept of reduction factor is related to that of predicate selectivity.
When reduction factor (selectivity) is 0.0 then no rows are selected and when reduction factor (selectivity) is 1.0
then all rows are selected.

In this workload, each query involves 2 to 5 of the 6  tables listed in table \ref{tbl:workload tables}, and all joins are on the movie identifier. 
Thus, between 1 to 4 CCF's may be applied to each query, given one CCF per table.
Two tables, \texttt{movie\_companies} and \texttt{title}, each contain two predicate columns, thus providing an opportunity to
evaluate the workload using a combination of single-attribute and multi-attribute CCF's.

\begin{table}[H]
  \centering
  {\begin{tabular}{l|r|l|r}
    & Number  & Predicate & Column \\
    Table & of Rows & Column    & Cardinality \\
    \hline
    cast\_info & 36,244,344 & role\_id & 11 \\
    movie\_companies & 2,609,129 & company\_id & \em{234,997} \\
    movie\_companies & 2,609,129 & company\_type\_id & 2 \\
    movie\_info & 14,835,720 & info\_type\_id & \em{71} \\
    movie\_info\_idx & 1,380,035 & info\_type\_id & 5 \\
    movie\_keyword & 4,523,930 & keyword\_id & \em{134,170} \\
    title & 2,528,312 & kind\_id & 6 \\
    title & 2,528,312 & production\_year & \em{132}
  \end{tabular}}
  \caption{Summary of tables and predicates used in JOB-light workload. "High" cardinalities emphasized.}
  \label{tbl:workload tables}
  \vspace{-0.3cm}
\end{table}

While most predicates are equality predicates, 55 JOB-light queries have inequality predicates on \texttt{title.production\_year}.
Because \texttt{production\_year} is an integer ranging from 1880 to 2019, we applied the simple binning technique in section \ref{sec:range queries} and mapped the 132 values to 16 roughly equal-sized intervals.
The inequality predicates were then converted to an in-list. In the cases where the scan operation was on the \texttt{title} table, we omitted this binning since the predicate could be evaluated directly.

The relevant IMDB data for the JOB-light workload is summarized in tables \ref{tbl:workload tables} and \ref{tbl:workload duplicates}.
These include the predicate columns and their cardinalities.
When using 4 bit attribute fingerprints, 4 of the predicates can be considered "high"
cardinality as the number of possible attribute values exceeds the 16 possible values for
attribute fingerprints. Even when using 8 bit attribute fingerprints, 2 of the predicates significantly exceed 256
possible values for attribute fingerprints.
Table \ref{tbl:workload duplicates} shows the number of duplicate predicate attribute values per join key.
This affects both the sizing of the sketches as well as the FPR.
The worst case is \texttt{movie\_keyword.keyword\_id}, which may have up to 539 distinct attribute
values for a single \texttt{movie\_id} join key.
While the \textit{Avg Dupes} reasonably vary from 1.00 to 9.48 distinct duplicates per join key,
CCF's must handle the worst case behavior of \textit{Max Dupes} which varies from 1 to 539.

\begin{table}[H]
  \centering
  {\begin{tabular}{l|l|l|r|r}
    & & Predicate & Avg & Max \\
    Table & Join Key & Column & Dupes & Dupes \\
    \hline
    cast\_info & movie\_id & role\_id & 4.70 & 11 \\
    movie\_companies & movie\_id & company\_id & 2.14 & \em{87} \\
    movie\_companies & movie\_id & company\_type\_id & 1.54 & 2 \\
    movie\_info & movie\_id & info\_type\_id & 4.17 & \em{68} \\
    movie\_info\_idx & movie\_id & info\_type\_id & 3.00 & 4 \\
    movie\_keyword & movie\_id & keyword\_id & 9.48 & \em{539} \\
    title & id & kind\_id & 1.00 & 1 \\
    title & id & production\_year & 1.00 & 1 \\
  \end{tabular}}

  \caption{Summary of average and maximum number of distinct, duplicate predicate attribute values per key. "High" Max Dupes emphasized.}
  \label{tbl:workload duplicates}
    \vspace{-0.3cm}
\end{table}

\subsection{JOB-light experiment setup}
We evaluated all four CCF methods:
Plain (regular cuckoo filter allowing duplicate keys);
Chained (CCF w/ chaining);
Bloom (CCF w/ Bloom); and
Mixed (CCF w/ Bloom conversion);
for a range of attribute sizes, fingerprint sizes, and Bloom
attribute sketch sizes.  Additionally, we evaluated these CCF's with and without
predicates on the attribute values of the tables being joined.

The following filter parameters were evaluated: 
attribute fingerprint sizes of $|\alpha| = 4$ or  $8$ bits,
fingerprint sizes of $|\kappa| = 7, 8,$ or $12$ bits, and
Bloom filter sizes ranging from 4 to 24. The
number of hash functions used in the Bloom filter was either fixed at 2, or was optimized to achieve the lowest FPR under the assumption that 2 attribute vectors are inserted per key. We found the latter setting resulted in uniformly worse FPR's and omit their results from the rest of this analysis.
In the case of Bloom conversion, this was $d+1$ instead. 
The maximum number of duplicate key fingerprints per bucket pair was always set to $d=3$.

The number of buckets and bucket size were independently chosen for each filter based on the 
analysis in section \ref{sec:param choice}. 
Given the predicted number of entries, we find the smallest bucket size which would both result in an acceptable load factor and have high likelihood of successfully inserting all input rows based on the multiset experiment. That is all runs for a  given bucket size failed at a higher load factor than the predicted load factor in the multiset experiment.
Note that the predicted number of entries needed can be estimated from the data using a bottom-k  \cite{cohen2007bottomk}  or two-level \cite{chen2017two} sampling scheme.

\subsection{JOB-light results}
We  compare how the CCF methods perform versus the theoretically best possible RF which makes full use of all predicate information and against
the best existing baseline of a Cuckoo Filter, which throws away information about predicates.
The best possible RF is the \textit{Exact Semijoin reduction factor} where no false positives are emitted,
and best existing baseline is the \textit{Cuckoo Filter reduction factor}.
We also examined the effect of changing the size of fingerprints and attribute sketches on the accuracy of CCF's.

{\bf Large filters:}
Figure \ref{fig:optimal largewithpredicates} plots the reduction factor on the y-axis for each of the 237
instances of table join key and predicate matches in the 70 workload queries. The parameters of all
filters are the same, having 8-bit attributes, 12-bit fingerprints, 4 hash functions for Bloom
filters.
These filters as "large" due to the choice of 8 and 12 bits for attribute and fingerprint size
respectively, as well as specifying a large Bloom filter sketch.
The ordering of these tables on the x-axis is in increasing order of the Exact Semijoin reduction factor;
therefore, the reduction factor of all
filters should be to the left and above the Exact Semijoin RF line.  

For large filters, the reduction factors of all filter methods are fairly close to Exact Semijoin with a few outliers.

Figure \ref{fig:keysonly largewithpredicates} also uses large filters, but has a different baseline than in \ref{fig:optimal largewithpredicates}.
Here the baseline is based on using large Bloom filters, but only looking up the filters' join keys, ignoring any other predicates.
This behavior is analogous to a regular Cuckoo filters rather than a CCF and represents the current state-of-the-art of pre-built filters.
The reduction factor of all filters should be to the right and below the Cuckoo Filter baseline.
\textit{CCF reduction factors are substantially better than current state-of-the-art pre-built filters.}
In many cases, where the Cuckoo Filter reduction factor is 1.0, meaning no reduction at all, the CCF RF's are in the range of $0.05 - 0.20$.

{\bf Small filters:}
Now we consider "small" filters,
using 4-bit attributes, 7-bit fingerprints, and 2 hash functions for Bloom, reducing filter size by more than half.
Figures \ref{fig:optimal smallwithpredicates} and \ref{fig:keysonly smallwithpredicates}
show the reduction factors by increasing Exact Semijoin and Cuckoo Filter RF baselines respectively, for filters of smaller size.
Compared to large filters, the number of non-optimal reduction factors in \ref{fig:optimal smallwithpredicates} are more visible. 
The separation of Bloom CCF reduction factors from Mixed and Chained is particularly noticeable, while
small Mixed and Chained RF's are similar to large filters.
\textit{Even small CCF's are substantially better than current state-of-the-art filters.}

\newpage

\begin{figure*}[ht]
  \centering
    \subcaptionbox{Large vs Exact Semijoin\label{fig:optimal largewithpredicates}}
      {\includegraphics[width=.44\textwidth]{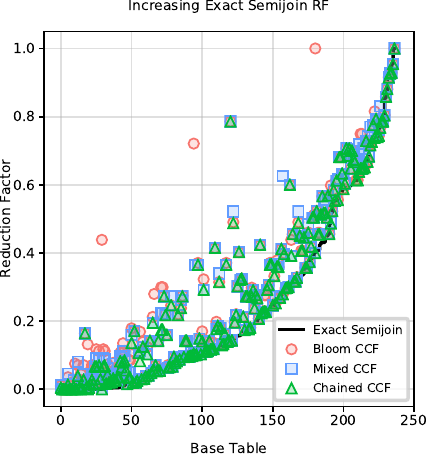}\hspace{-0.1cm}}
    \subcaptionbox{Large vs Cuckoo Filter\label{fig:keysonly largewithpredicates}}
      {\includegraphics[width=.44\textwidth]{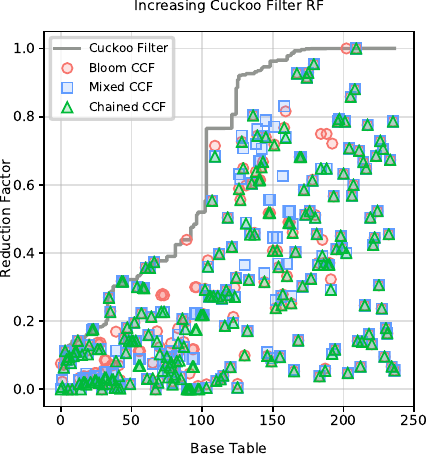}\hspace{-0.1cm}}
    \subcaptionbox{Small vs Exact Semijoin\label{fig:optimal smallwithpredicates}}
      {\includegraphics[width=.44\textwidth]{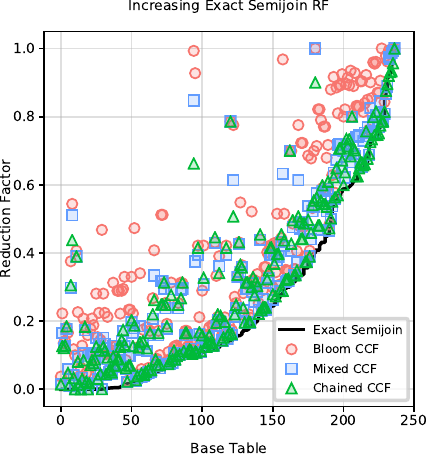}\hspace{-0.1cm}}
    \subcaptionbox{Small vs Cuckoo Filter\label{fig:keysonly smallwithpredicates}}
      {\includegraphics[width=.44\textwidth]{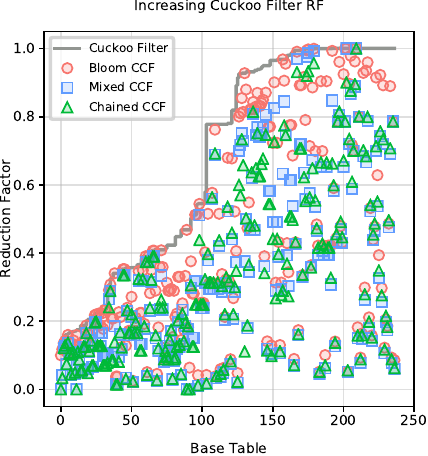}\hspace{-0.0cm}}
	\caption{Reduction factors for CCF's versus the Exact Semijoin reduction factor baseline or Cuckoo Filter baseline. Using large filters for \ref{fig:optimal largewithpredicates} and \ref{fig:keysonly largewithpredicates} and small filters for \ref{fig:optimal smallwithpredicates} and \ref{fig:keysonly smallwithpredicates}. Small CCF's exhibit an increased FPR. The reduction factors of CCF's are much improved over Cuckoo Filters.
  }\label{fig:combined withandwithoutpredicates}
\end{figure*}

{\bf Plain filters:}
Note that none of these figures have results for Plain CCF filters as they did not result in reasonably sized filters.
The smallest Plain filter was larger than every CCF and had an inefficient load factor of 35\%.
Larger attribute fingerprints result in insertion failures for any reasonable filter parameters because the number of distinct attribute values is too large.
For example, as shown in table \ref{tbl:workload duplicates}, \texttt{movie\_keyword.keyword\_id} has 539 distinct duplicates which would require a minimum bucket size of 270.

\subsection{JOB-light aggregate results}

On aggregate, the reduction factor over all table scans was $\approx 0.28$ using a CCF with chaining and "small" sketches. In contrast, using regular cuckoo filters with no predicate information resulted in a reduction factor of $\approx 0.68$. 
The best possible reduction factor obtained from performing an exact semi-join was $0.20$. Furthermore,
half of the difference in reduction factors between the CCF and exact semi-join is explained by the binning on the range predictate on \texttt{production\_year}
as seen in figures \ref{fig:scatterplot withbinnedyear} and \ref{fig:overall RF}.
If an exact semi-join is performed on data with binned \texttt{production\_year}, the optimal reduction factor is $0.24$.

\begin{figure}[H]
  \centering
	\begin{tabular}{cc}
		\hspace{-0.14cm}
	\includegraphics[width=.44\textwidth]{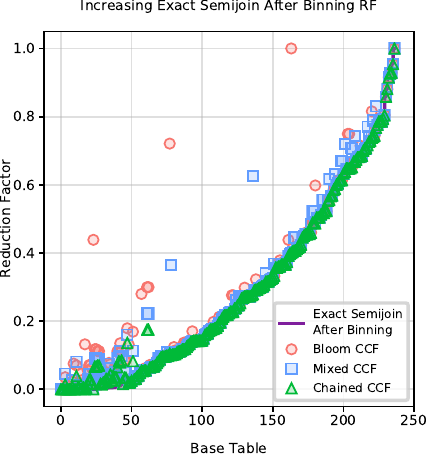}&
	\hspace{-0.48cm}
		\includegraphics[width=.44\textwidth]{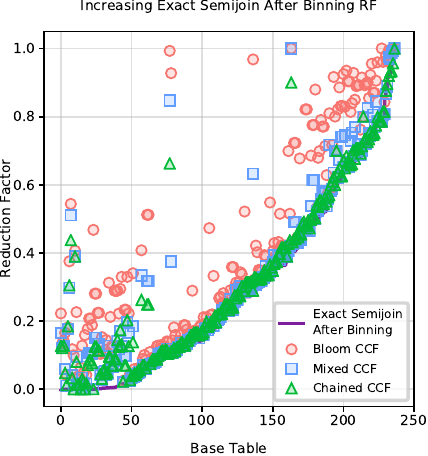}
	\end{tabular}
  \caption{Reduction factors for large (on the left) and small (on the right) CCF's versus the exact semijoin baseline after binning \texttt{title.production\_year}. The FPR's of both large and small CCF's are noticeably less than those in \ref{fig:optimal largewithpredicates} and \ref{fig:optimal smallwithpredicates} respectively.}
\label{fig:scatterplot withbinnedyear}
\end{figure}

Furthermore, using the largest sized CCF, which was a CCF with chaining, 12 bit key fingerprints, and 8 bit attribute fingerprints, the FPR was just $0.8\%$ relative to a semi-join with binned with \texttt{production\_year}, and the reduction factor was $0.245$.  
The FPR including errors due to binning was $6.1\%$.

Figure \ref{fig:RF by num joins} shows that the benefits of CCF's are compounded as more joins are added.
Figure \ref{fig:size reduction per column} shows the size of a CCF relative to the raw data. Each CCF represents a movie id and the given predicate column. The gains from each CCF can varying significantly based on the underlying data and the number of duplicate keys.

\vspace{-0.1cm}
\begin{figure}[H]
  \centering
	\includegraphics[width=.8\textwidth]{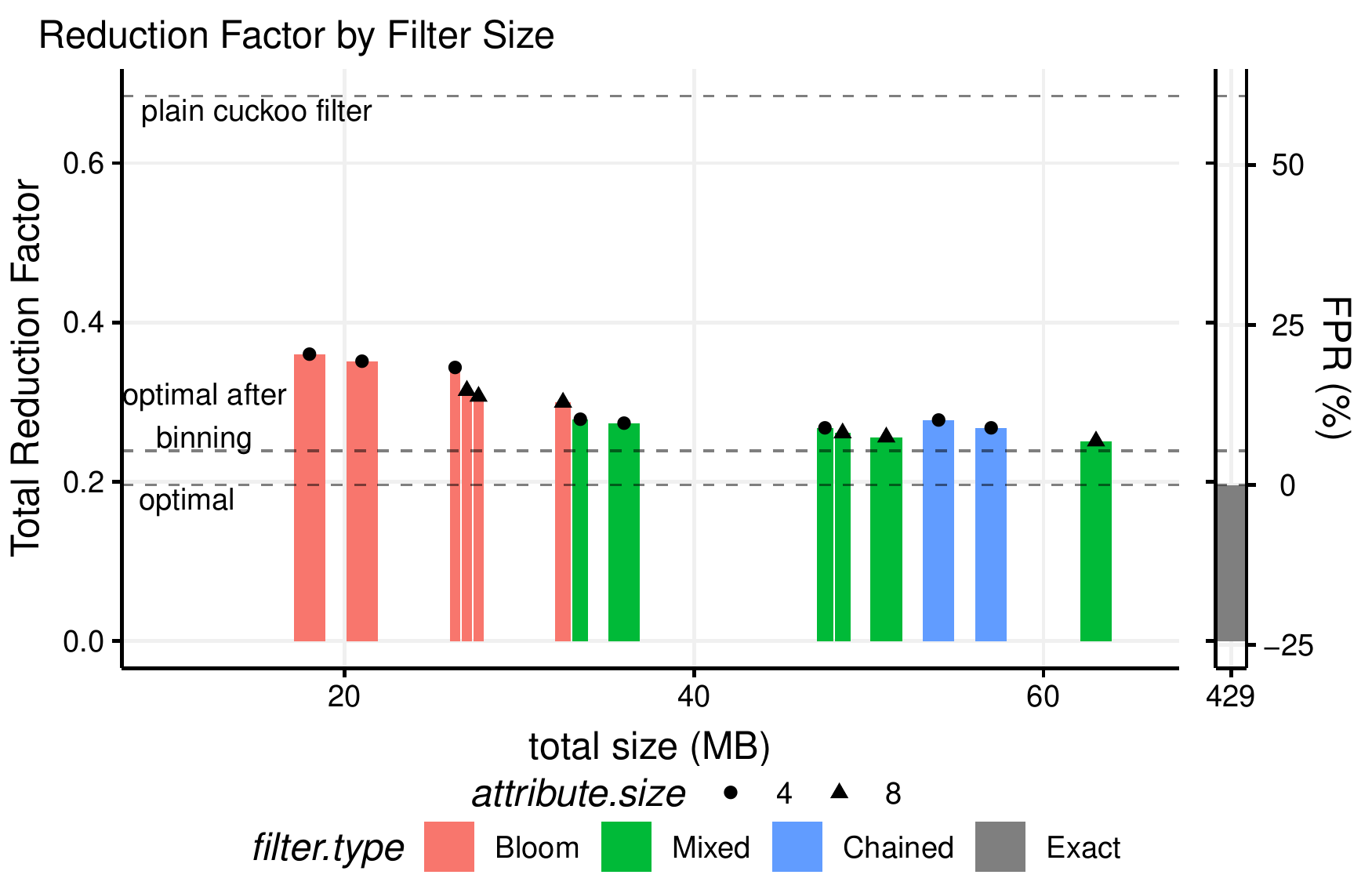}
	\caption{Overall RF and FPR  by filter type and size. CCF's obtain significantly better reduction factors while using an order of magnitude less space than a hash table performing exact membership testing. Bloom attribute sketches resulted in the smallest sizes. Mixed attribute sketches using Bloom conversion achieved the smallest FPR for a given amount of space. A modestly sized CCF that is $1/7^{th}$ the size of a raw hash table contributes a negligible number of false positives other than those introduced from binning \texttt{production\_year}.}
	\label{fig:overall RF}
\end{figure}

\begin{figure}[H]
	\vspace{-0.1cm}
  \centering
	\includegraphics[width=.8\textwidth]{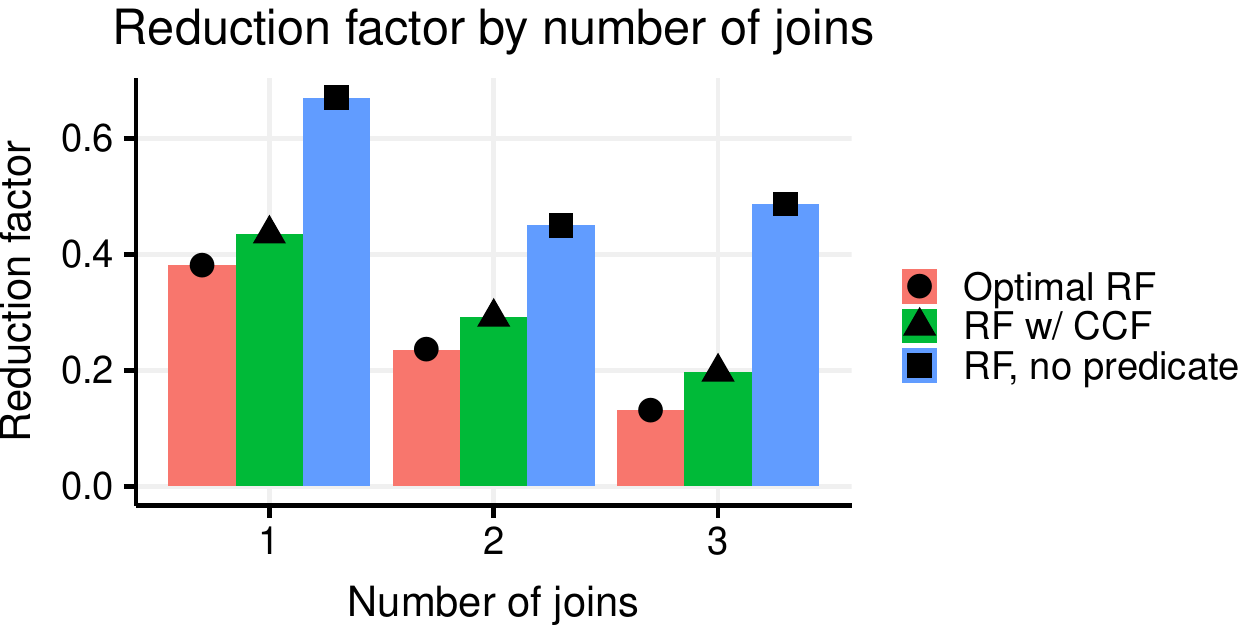}
	\caption{Using CCF's with predicates results in multiplicative benefits in reducing output sizes.}
	\label{fig:RF by num joins}
	\vspace{-0.1cm}
\end{figure}

\begin{figure}[H]
	\vspace{-0.4cm}
  \centering
 \includegraphics[width=.8\textwidth]{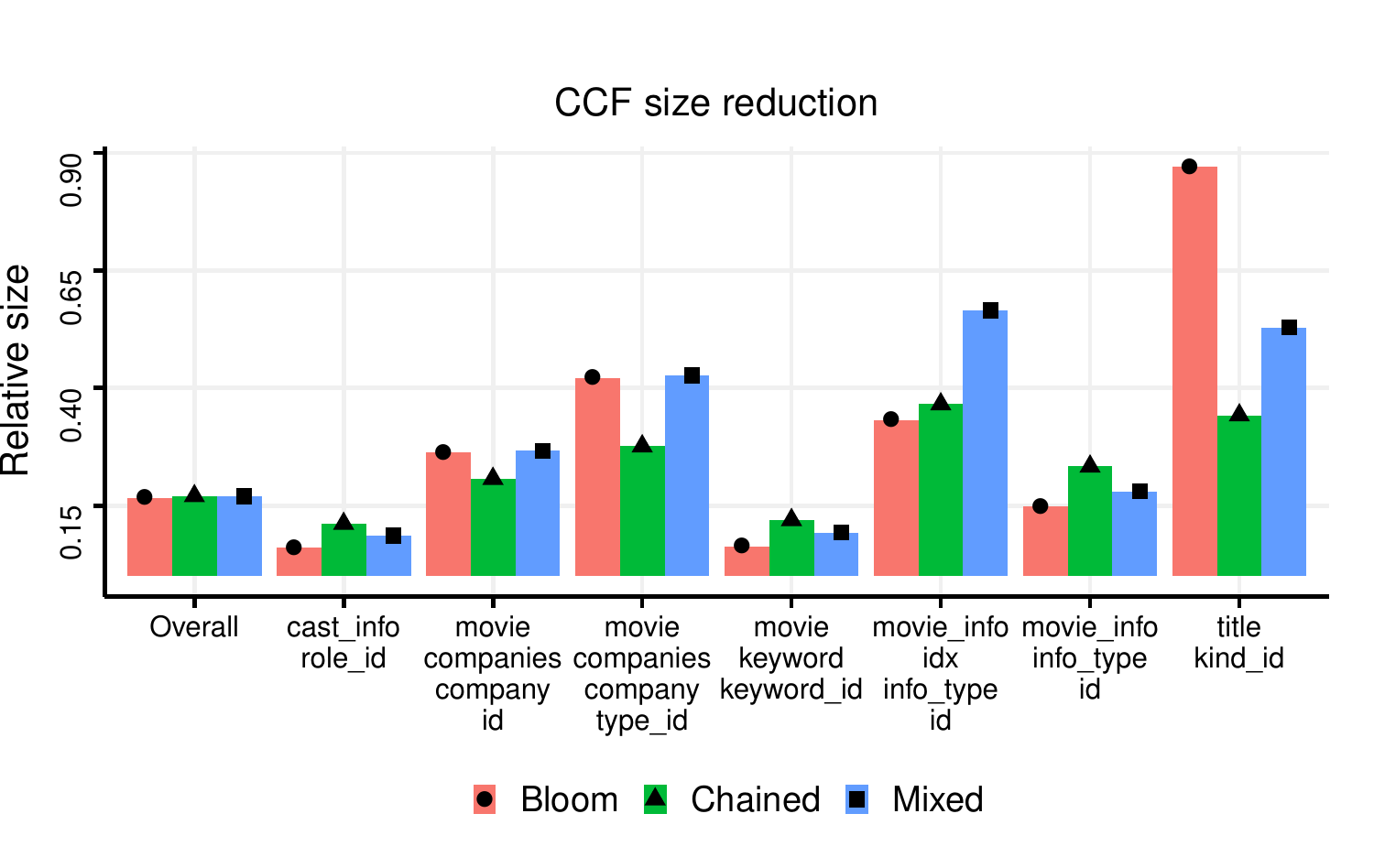}
 \caption{Given CCF's of equal size, each CCF's size relative to its underlying data varies significantly. Bloom filters yield larger size reductions for tables with many duplicated keys while  chaining  yields larger reductions on tables with unique keys. }
 	\vspace{-0.2cm}
 	\label{fig:size reduction per column}
\end{figure}

\subsection{CCF Size}
We did not find that one method clearly outperformed others at any given size. However, Bloom filter attribute sketches were the  ones that could produce very small sketches. The smallest set of sketches required 18.5 megabytes. 
By comparison, if keys and high cardinality attributes are stored using 32 bits and low cardinality attributes are stored with 8 bits, then the raw data used in the sketches requires 322 megabytes of space. A open addressing hash table would require 429 megabytes to store these if it could achieve a 75\% load factor. The much smaller sizes of the CCF are due both to sketching keys and attributes as well as due to elimination of duplicate keys.

Figure \ref{fig:overall RF} shows the reduction factor as a function of the total size of all sketches for various parameter settings. The maximum sketch size was restricted to 60 megabytes. Bloom filter attribute sketches yield the smallest possible sketches since they store no duplicates, but they also yield the highest FPR of up to $20.4\%$. The mixed attribute sketches which use attribute fingerprint vectors and switch to Bloom filters when there are too many duplicates were able to retain most of the benefits of attribute fingerprint vectors with significantly less space.

For the range of parameters we considered, given two sketches with the same size, allocating more space to the attribute sketch yielded a smaller FPR than allocating more space to the key fingerprint size. This can be seen in figure \ref{fig:overall RF} in cases where there are adjacent, identically colored half-width bars but different marks on the bar.

Figure \ref{fig:overall RF} shows the space versus reduction factor tradeoff for different CCF types.  
At 35 megabytes, the reduction factor of a CCF with mixed attributed sketches is within 10\% of the optimal and within 5\% 
of the best a CCF can do with binned \texttt{production\_year}.
While larger CCFs further improve the FPR, the practical benefit in the reduction factor is minimal. A 35 megabyte set of CCFs represents an over $10\times$ reduction in size in memory.
If string based columns are included, which are very common in real world databases \cite{vogelsgesang2018get}, we expect the space savings to be much greater.

\subsection{Run-time performance}
While we did not optimize our single threaded C++ implementation of CCF's, all filter methods could process 1 million matches per second.
(Measurements were taken using a single Intel(R) Xeon(R) CPU E5-2630 v4 @ 2.20GHz core running CentOS Linux 7.)
 We used the Jenkins lookup3 hash function \cite{jenkins2006lookup3} that is also used by the original cuckoo filter paper \cite{fan2014cuckoo}. For attribute fingerprints, we used the small value optimization given in section \ref{sec:optimizations}.

\section{Discussion and future work} \label{sec:discussion}
While we focus on the application of CCF's to join processing, the sketch itself can be seen as a sketch of the entire input table with a hash based index on the key. Furthermore, the chaining technique can also be used to allow regular cuckoo hash tables, which store the full key, to store duplicates. Thus, we believe the sketch and its methods have more general applications beyond join processing. 

We have made the sketch much more robust to duplicated keys so that the chance of successfully inserting all rows is much more predictable if the predicted number of filled entries can be estimated. However, to predict the number of filled entries requires another sketch to be computed first.  Future work to improve this method, as well as many other approximate set membership sketches, includes enabling dynamic adjustment of the size of the sketch.

Furthermore, while we present empirical evidence that, on data containing duplicate keys, the CCF with chaining can achieve a load factor comparable to that of a regular cuckoo filter acting on data with no duplicates, we do not have a theoretical proof that this is always the case.

\section{Conclusions} \label{sec:conclusions}
We introduce conditional cuckoo filters, a new sketch for approximate set membership queries which enables equality predicates to be added to queries. This yields at least two significant advantages in join processing. First, it enables filters that are specific to the predicate to be applied to both build and probe sides of a join, not just the probe side. This increases the number of cases where the data structures created on the build side fits into main memory. Second, it enables predicate pushdown from one table to all other tables in the transitive closure of the join graph. 
 
We propose, analyze, and evaluate multiple variations of CCF sketches. In particular, we extend cuckoo hash tables using a chaining technique that makes it robust to duplicated keys and allows high load factors to be achieved. All variations reduced the number of rows emitted by a scan operator to close to the optimal number on the workload and did so with substantial space savings. This represents a significant improvement over existing filters that do not support predicates. The properties of the sketches are analyzed which allows practitioners to predict the performance of the sketches and to choose appropriate parameters for them.

\bibliographystyle{abbrv}
\bibliography{ling2,rick}

\begin{thebibliography}{10}

\bibitem{apachecassandra}
Apache.
\newblock Cassandra.
\newblock \url{http://cassandra.apache.org/}.

\bibitem{apachehbase}
Apache.
\newblock Hbase.
\newblock \url{http://hbase.apache.org/}.

\bibitem{bender2012quotientfilter}
M.~A. Bender, M.~Farach-Colton, R.~Johnson, R.~Kraner, B.~C. Kuszmaul,
  D.~Medjedovic, P.~Montes, P.~Shetty, R.~P. Spillane, and E.~Zadok.
\newblock Don't thrash: how to cache your hash on flash.
\newblock {\em Proceedings of the VLDB Endowment}, 5(11):1627--1637, 2012.

\bibitem{bernstein1981query}
P.~A. Bernstein, P.~A. Bernstein, N.~Goodman, E.~Wong, C.~L. Reeve, and J.~B.
  Rothnie~Jr.
\newblock Query processing in a system for distributed databases (sdd-1).
\newblock {\em ACM Transactions on Database Systems (TODS)}, 6(4):602--625,
  1981.

\bibitem{bloom1970bloomfilter}
B.~H. Bloom.
\newblock Space/time trade-offs in hash coding with allowable errors.
\newblock {\em Communications of the ACM}, 13(7):422--426, 1970.

\bibitem{bonomi2006improved}
F.~Bonomi, M.~Mitzenmacher, R.~Panigrahy, S.~Singh, and G.~Varghese.
\newblock An improved construction for counting bloom filters.
\newblock In {\em European Symposium on Algorithms}, pages 684--695. Springer,
  2006.

\bibitem{bose2008false}
P.~Bose, H.~Guo, E.~Kranakis, A.~Maheshwari, P.~Morin, J.~Morrison, M.~Smid,
  and Y.~Tang.
\newblock On the false-positive rate of bloom filters.
\newblock {\em Information Processing Letters}, 108(4):210--213, 2008.

\bibitem{bratbergsengen1984hashing}
K.~Bratbergsengen.
\newblock Hashing hethods and relational algebra operations.
\newblock In {\em Proceedings of the 1984 Very Large Database Conference}.
  Citeseer, 1984.

\bibitem{Breslow2018mortonfilter}
A.~D. Breslow and N.~S. Jayasena.
\newblock Morton filters: Faster, space-efficient cuckoo filters via biasing,
  compression, and decoupled logical sparsity.
\newblock {\em Proc. VLDB Endow.}, 11(9):1041--1055, May 2018.

\bibitem{Broder02networkapplications}
A.~Broder, M.~Mitzenmacher, and A.~B. I.~M. Mitzenmacher.
\newblock Network applications of bloom filters: A survey.
\newblock In {\em Internet Mathematics}, pages 636--646, 2002.

\bibitem{chen2017two}
Y.~Chen and K.~Yi.
\newblock Two-level sampling for join size estimation.
\newblock In {\em SIGMOD}. ACM, 2017.

\bibitem{cohen2007bottomk}
E.~Cohen and H.~Kaplan.
\newblock Summarizing data using bottom-k sketches.
\newblock In {\em PODC}, 2007.

\bibitem{Das:2015:QOO:2824032.2824074}
D.~Das, J.~Yan, M.~Zait, S.~R. Valluri, N.~Vyas, R.~Krishnamachari,
  P.~Gaharwar, J.~Kamp, and N.~Mukherjee.
\newblock Query optimization in oracle 12c database in-memory.
\newblock {\em Proc. VLDB Endow.}, 8(12):1770--1781, Aug. 2015.

\bibitem{eppstein2016cuckoo}
D.~Eppstein.
\newblock Cuckoo filter: Simplification and analysis.
\newblock In {\em 15th Scandinavian Symposium and Workshops on Algorithm Theory
  (SWAT 2016)}. Schloss Dagstuhl-Leibniz-Zentrum fuer Informatik, 2016.

\bibitem{rocksdb}
Facebook.
\newblock Rocksdb.
\newblock \url{https://github.com/facebook/rocksdb}.

\bibitem{fan2014cuckoo}
B.~Fan, D.~G. Andersen, M.~Kaminsky, and M.~D. Mitzenmacher.
\newblock Cuckoo filter: Practically better than bloom.
\newblock In {\em Proceedings of the 10th ACM International on Conference on
  emerging Networking Experiments and Technologies}, pages 75--88. ACM, 2014.

\bibitem{fan1998summary}
L.~Fan, P.~Cao, J.~Almeida, and A.~Z. Broder.
\newblock Summary cache: A scalable wide-area web cache sharing protocol.
\newblock In {\em ACM SIGCOMM Computer Communication Review}, volume~28, pages
  254--265. ACM, 1998.

\bibitem{galindo2008optimizing}
C.~A. Galindo-Legaria, T.~Grabs, S.~Gukal, S.~Herbert, A.~Surna, S.~Wang,
  W.~Yu, P.~Zabback, and S.~Zhang.
\newblock Optimizing star join queries for data warehousing in microsoft sql
  server.
\newblock In {\em 2008 IEEE 24th International Conference on Data Engineering},
  pages 1190--1199. IEEE, 2008.

\bibitem{leveldb}
Google.
\newblock Leveldb.
\newblock \url{https://github.com/google/leveldb/}.

\bibitem{graf2019xor}
T.~M. Graf and D.~Lemire.
\newblock Xor filters: Faster and smaller than bloom and cuckoo filters.
\newblock {\em arXiv preprint arXiv:1912.08258}, 2019.

\bibitem{postgres}
P.~G.~D. Group.
\newblock Postgresql.
\newblock \url{https://www.postgresql.org/}.

\bibitem{hentschel2018columnsketch}
B.~Hentschel, M.~S. Kester, and S.~Idreos.
\newblock Column sketches: A scan accelerator for rapid and robust predicate
  evaluation.
\newblock {\em SIGMOD}, 2018.

\bibitem{jenkins2006lookup3}
B.~Jenkins.
\newblock Jenkins lookup3 hash function.
\newblock \url{http://www.burtleburtle.net/bob/c/lookup3.c}.

\bibitem{kipf2019learned}
A.~Kipf, T.~Kipf, B.~Radke, V.~Leis, P.~Boncz, and A.~Kemper.
\newblock Learned cardinalities: Estimating correlated joins with deep
  learning.
\newblock {\em CIDR}, 2019.

\bibitem{Kipf2018LearnedCE}
A.~Kipf, T.~Kipf, B.~Radke, V.~Leis, P.~A. Boncz, and A.~Kemper.
\newblock Learned cardinalities: Estimating correlated joins with deep
  learning.
\newblock {\em ArXiv}, abs/1809.00677, 2018.

\bibitem{kitsuregawa1983application}
M.~Kitsuregawa, H.~Tanaka, and T.~Moto-Oka.
\newblock Application of hash to data base machine and its architecture.
\newblock {\em New Generation Computing}, 1(1):63--74, 1983.

\bibitem{lahiri2015oracle}
T.~Lahiri, S.~Chavan, M.~Colgan, D.~Das, A.~Ganesh, M.~Gleeson, S.~Hase,
  A.~Holloway, J.~Kamp, T.-H. Lee, et~al.
\newblock Oracle database in-memory: A dual format in-memory database.
\newblock In {\em 2015 IEEE 31st International Conference on Data Engineering},
  pages 1253--1258. IEEE, 2015.

\bibitem{Lee2012mapreducebloom}
T.~Lee, K.~Kim, and H.-J. Kim.
\newblock Join processing using bloom filter in mapreduce.
\newblock In {\em Proceedings of the 2012 ACM Research in Applied Computation
  Symposium}, RACS '12, pages 100--105, New York, NY, USA, 2012. ACM.

\bibitem{leis2015good}
V.~Leis, A.~Gubichev, A.~Mirchev, P.~Boncz, A.~Kemper, and T.~Neumann.
\newblock How good are query optimizers, really?
\newblock {\em Proceedings of the VLDB Endowment}, 9(3):204--215, 2015.

\bibitem{Mackert1986bloomjoin}
L.~F. Mackert and G.~M. Lohman.
\newblock R* optimizer validation and performance evaluation for distributed
  queries.
\newblock In {\em Proceedings of the 12th International Conference on Very
  Large Data Bases}, VLDB '86, pages 149--159, San Francisco, CA, USA, 1986.
  Morgan Kaufmann Publishers Inc.

\bibitem{mitzenmacher2002compressed}
M.~Mitzenmacher.
\newblock Compressed bloom filters.
\newblock {\em IEEE/ACM Transactions on Networking (TON)}, 10(5):604--612,
  2002.

\bibitem{Mullin1990semijoin}
J.~K. Mullin.
\newblock Optimal semijoins for distributed database systems.
\newblock {\em IEEE Trans. Softw. Eng.}, 16(5):558--560, May 1990.

\bibitem{Mullin1993joinsize}
J.~K. Mullin.
\newblock Estimating the size of a relational join.
\newblock {\em Information Systems}, 18(3):189 -- 196, 1993.

\bibitem{oracle11gresultscache}
Oracle.
\newblock Using oracle database 11g release 2 result cache in an oracle rac
  environment.

\bibitem{pagh2005optimal}
A.~Pagh, R.~Pagh, and S.~S. Rao.
\newblock An optimal bloom filter replacement.
\newblock In {\em Proceedings of the sixteenth annual ACM-SIAM symposium on
  Discrete algorithms}, pages 823--829. Society for Industrial and Applied
  Mathematics, 2005.

\bibitem{pagh2004cuckoo}
R.~Pagh and F.~F. Rodler.
\newblock Cuckoo hashing.
\newblock {\em Journal of Algorithms}, 51(2):122--144, 2004.

\bibitem{putze2007cache}
F.~Putze, P.~Sanders, and J.~Singler.
\newblock Cache-, hash-and space-efficient bloom filters.
\newblock In {\em International Workshop on Experimental and Efficient
  Algorithms}, pages 108--121. Springer, 2007.

\bibitem{Quoc2018approxjoin}
D.~L. Quoc, I.~E. Akkus, P.~Bhatotia, S.~Blanas, R.~Chen, C.~Fetzer, and
  T.~Strufe.
\newblock Approxjoin: Approximate distributed joins.
\newblock In {\em Proceedings of the ACM Symposium on Cloud Computing}, SoCC
  '18, pages 426--438, New York, NY, USA, 2018. ACM.

\bibitem{sapase2013features}
SAP.
\newblock Adaptive server enterprise 15.7 sp100 - new features guide.

\bibitem{sears2012blsm}
R.~Sears and R.~Ramakrishnan.
\newblock blsm: a general purpose log structured merge tree.
\newblock In {\em Proceedings of the 2012 ACM SIGMOD International Conference
  on Management of Data}, pages 217--228. ACM, 2012.

\bibitem{tarkoma2011theory}
S.~Tarkoma, C.~E. Rothenberg, and E.~Lagerspetz.
\newblock Theory and practice of bloom filters for distributed systems.
\newblock {\em IEEE Communications Surveys \& Tutorials}, 14(1):131--155, 2011.

\bibitem{vogelsgesang2018get}
A.~Vogelsgesang, M.~Haubenschild, J.~Finis, A.~Kemper, V.~Leis, T.~Muehlbauer,
  T.~Neumann, and M.~Then.
\newblock Get real: How benchmarks fail to represent the real world.
\newblock In {\em Workshop on Testing Database Systems}, 2018.

\bibitem{wang2019vacuum}
M.~Wang, M.~Zhou, S.~Shi, and C.~Qian.
\newblock Vacuum filters: more space-efficient and faster replacement for bloom
  and cuckoo filters.
\newblock {\em Proceedings of the VLDB Endowment}, 13(2):197--210, 2019.

\bibitem{weininger2002efficient}
A.~Weininger.
\newblock Efficient execution of joins in a star schema.
\newblock In {\em Proceedings of the 2002 ACM SIGMOD international conference
  on Management of data}, pages 542--545. ACM, 2002.

\end{thebibliography}

\end{document}